\newtheorem{theorem}{Theorem}
\newtheorem{definition}{Definition}
\newtheorem{remark}{Remark}
\newtheorem*{assumption*}{Assumption}
\newtheorem{proposition}{Proposition}
\providecommand{\customgenericname}{}
\newcommand{\newcustomtheorem}[2]{%
  \newenvironment{#1}[1]
  {%
   \renewcommand\customgenericname{#2}%
   \renewcommand\theinnercustomgeneric{##1}%
   \innercustomgeneric
  }
  {\endinnercustomgeneric}
}
\renewcommand{\algocf@captiontext}[2]{#1\algocf@typo. \AlCapFnt{}#2} 
\def\@algocf@capt@plain{top}
\renewcommand{\algocf@makecaption}[2]{%
  \addtolength{\hsize}{\algomargin}%
  \sbox\@tempboxa{\algocf@captiontext{#1}{#2}}%
  \ifdim\wd\@tempboxa >\hsize
  \hskip .5\algomargin%
  \parbox[t]{\hsize}{\algocf@captiontext{#1}{#2}}
  \else%
  \global\@minipagefalse%
  \hbox to\hsize{\box\@tempboxa}
  \fi%
  \addtolength{\hsize}{-\algomargin}%
}
\begin{document}

\sectionfont{\bfseries\large\sffamily}%

\subsectionfont{\bfseries\sffamily\normalsize}%




\title{\bf Determining vaccine responders in the presence of baseline immunity using single-cell assays and paired control samples}


\author[1]{Zhe Chen$^\dagger$}
\author[2]{Siyu Heng$^\dagger$}
\author[3,4,5]{Asa Tapley}
\author[3]{Stephen De Rosa}
\author[3]{Bo Zhang$^{*}$}

\affil[1]{Department of Biostatistics, Epidemiology and Informatics, University of Pennsylvania, Philadelphia, USA}
\affil[2]{Department of Biostatistics, School of Global Public Health, New York University, New York, USA}
\affil[3]{Vaccine and Infectious Disease Division, Fred Hutchinson Cancer Center, Seattle, USA}
\affil[4]{Division of Allergy and Infectious Diseases, Department of Medicine, University of Washington, Seattle, USA}
\affil[5]{Department of Medicine, University of Cape Town, Cape Town, South Africa}
\date{}

\maketitle

\let\thefootnote\relax\footnotetext{${\dagger}$Zhe Chen and Siyu Heng contributed equally to this work.}

\let\thefootnote\relax\footnotetext{${*}$Corresponding Author: Bo Zhang (bzhang3@fredhutch.org).}

\noindent
\textsf{{\bf Abstract}: A key objective in vaccine studies is to evaluate vaccine-induced immunogenicity and determine whether participants have mounted a response to the vaccine. Cellular immune responses are essential for assessing vaccine-induced immunogenicity, and single-cell assays, such as intracellular cytokine staining (ICS) and B-cell phenotyping (BCP), are commonly employed to profile individual immune cell phenotypes and the cytokines they produce after stimulation. In this article, we introduce a novel statistical framework for identifying vaccine responders using ICS data collected before and after vaccination. This framework incorporates paired control data to account for potential unintended variations between assay runs, such as batch effects, that could lead to misclassification of participants as vaccine responders or non-responders. To formally integrate paired control data for accounting for assay variation across different time points (i.e., before and after vaccination), our proposed framework calculates and reports two $p$-values, both adjusting for paired control data but in distinct ways: (i) the maximally adjusted $p$-value, which applies the most conservative adjustment to the unadjusted $p$-value, ensuring validity over all plausible batch effects consistent with the paired control samples' data, and (ii) the minimally adjusted $p$-value, which imposes only the minimal adjustment to the unadjusted $p$-value, such that the adjusted $p$-value cannot be falsified by the paired control samples' data. Minimally and maximally adjusted $p$-values offer a balanced approach to managing type-I error rates and statistical power in the presence of batch effects. We apply this framework to analyze ICS data collected at baseline and 4 weeks post-vaccination from the COVID-19 Prevention Network (CoVPN) 3008 study. Our analysis helps address two clinical questions: 1) which participants exhibited evidence of an incident Omicron infection between baseline and 4 weeks after receiving the final dose of the primary vaccination series, and 2) which participants showed vaccine-induced T cell responses against the Omicron BA.4/5 Spike protein.

}%

\vspace{0.3 cm}
\noindent
\textsf{{\bf Keywords}: Assay misclassification; Immunogenicity; Intracellular cytokine staining; Negative control; Nuisance parameter; $P$-value adjustment; Vaccine}

\section{Introduction}
\label{sec: introduction}
A key task in vaccine research is to assess the immunogenicity of a vaccine regimen. Vaccination can activate T cells, which then produce cytokines that play a crucial role in mediating the immune response. The intracellular cytokine staining (ICS) assay is commonly used in vaccine studies to measure antigen-specific cytokine production in response to stimulation. This assay operates at the single-cell level, where each cell is classified as either positive or negative for a specific combination of cytokines (e.g., IFN-$\gamma$ and/or IL-2), depending on whether it expresses the cytokine profile of interest. The output of the ICS assay reports the number of positive cells that match a specific phenotype (e.g., antigen-specific CD4+ or CD8+ T cells expressing IFN-$\gamma$ and/or IL-2) relative to the total number of cells in the sample. The total cell count in a given sample can range from a thousand to a couple hundred thousand, while the number of cells of the subpopulation of interest (i.e., the positive cells) can vary from a few dozen to several hundred in a single assay run, depending on the specific scientific question being addressed.

Two common tasks in immunogenicity studies are: 1) determining whether a participant's sample shows any response to a peptide pool; and 2) determining whether a participant's sample exhibits a vaccine-induced response. To address the first task, the number of positive cells in a sample (out of the total cell count) at a pre-specified time point --- such as baseline or post-vaccination --- is compared to that from a paired negative control sample. In experimental science, a negative control refers to a sample that does not undergo the treatment or procedure being studied and is expected not to show any positive response. A participant is considered positive if their sample meets a predefined criterion. Common methods for defining positivity include ad hoc rules based on fold change, one-sided Fisher’s exact test applied to $2 \times 2$ contingency tables, and Bayesian mixed models approaches \citep{finak2014mixture}.

A typical scenario involves analyzing samples collected from a participant at both baseline ($T_0$) and a post-vaccination time point that is expected to represent peak immune response ($T_1$). The participant's response at each time point is assessed separately, and positivity is determined for both $T_0$ and $T_1$. Based on the results, a participant can be classified into one of four categories: positive at both baseline and post-vaccination ($++$), negative at baseline but positive post-vaccination ($-+$), positive at baseline but negative post-vaccination ($+-$), or negative at both time points ($--$).

\begin{figure}
\centering
\begin{minipage}{.66\textwidth}
  \centering
  \includegraphics[width=\linewidth]{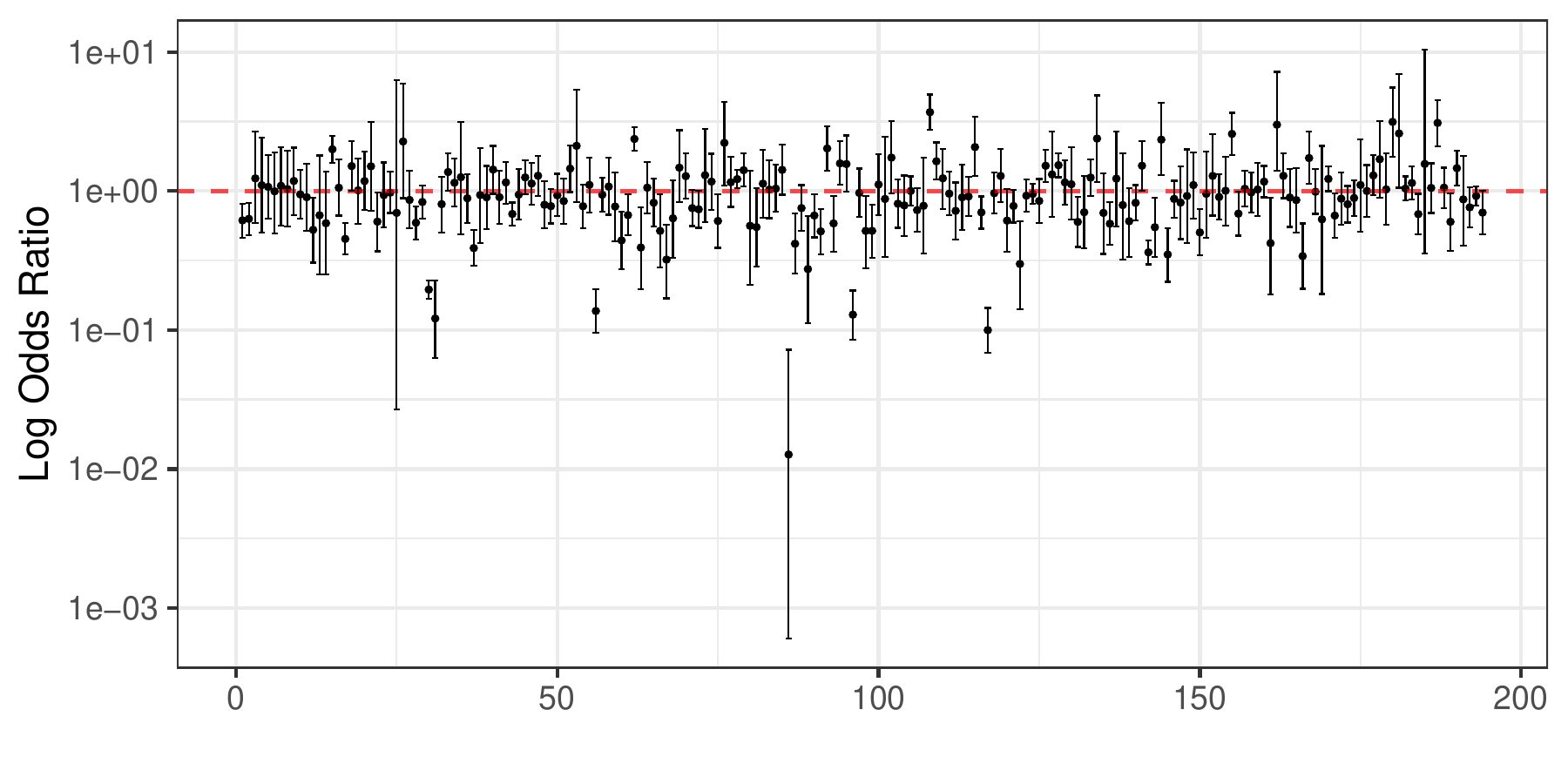}
\end{minipage}%
\begin{minipage}{.33\textwidth}
  \centering
  \includegraphics[width=\linewidth]{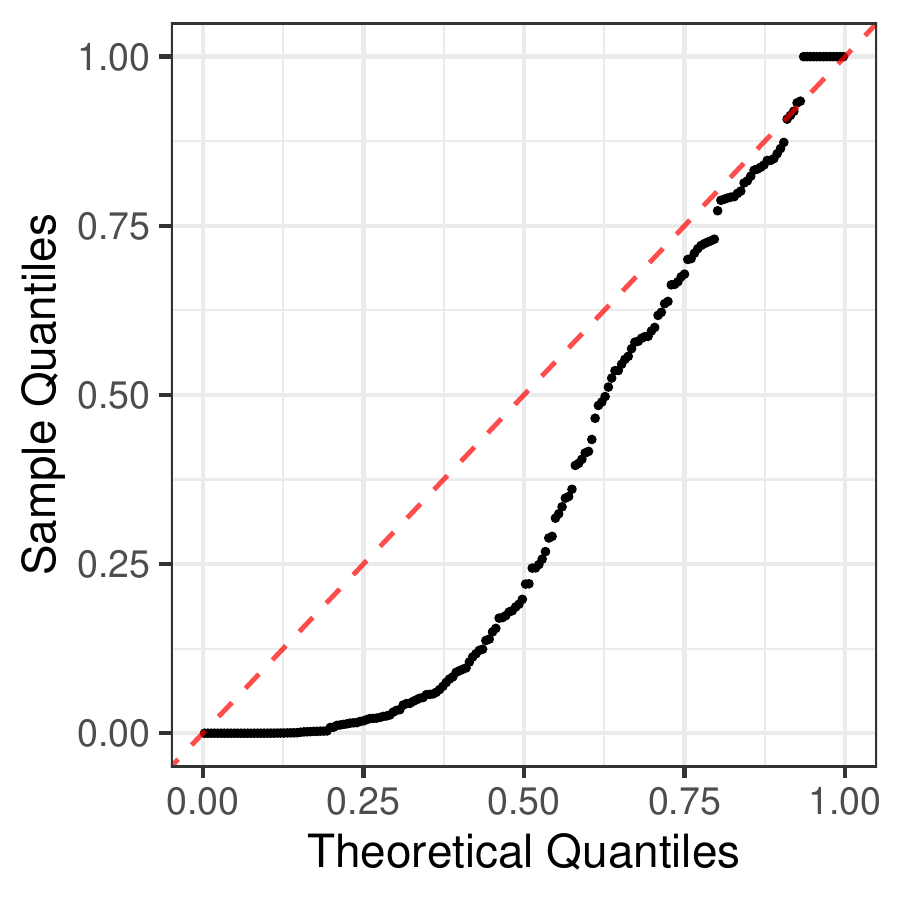}
\end{minipage}
\caption{Left panel: Estimated log odds ratio (baseline vs post-vaccination) using two-sided Fisher exact test from $N = 194$ negative control samples. Each vertical interval represents one participant's 95\% confidence interval for the log odds ratio. Among these, $65/194$ $(33.5\%)$ exhibit a significant difference between the two time points (before and after vaccination). Right panel: Sample quantiles of $N = 194$ $p$-values against a uniform distribution. The Q-Q plot suggests the P-values are anti-conservative.}
\label{fig: illustration}
\end{figure}

In addition to assessing whether a participant shows any immune response at baseline and post-vaccination, vaccine developers often aim to determine whether the vaccination effectively boosts the participant's immune response. In other words, they seek to identify vaccine responders. To answer this question, one might compare the post-vaccination level to the baseline level using methods such as Fisher’s exact test or tests for equal proportions. However, such comparisons can be biased due to batch effects when baseline and post-vaccination samples are measured in different assay runs \citep[Section 4.1]{finak2014mixture}. Specifically, the false positive rate (i.e., the probability that a negative cell is misclassified as positive) and the false negative rate (i.e., the probability that a positive cell is misclassified as negative) can vary between assay runs. For example, if the false positive rate is higher for post-vaccination samples than for baseline samples, participants could be falsely classified as vaccine responders despite having no genuine vaccine-induced response. In the context of outcome misclassification, this type of bias is referred to as differential outcome misclassification (\citealp{magder1997logistic}; \citealp{imai2010causal}; \citealp{chen2019inflation}; \citealp{edwards2023does}; \citealp{heng2025sensitivity}). 

Fortunately, in most experimental designs, researchers pair one or more control samples with a participant’s baseline and post-vaccination samples. By analyzing the number of positive cells in the control samples, one can better understand how batch effects may influence the comparison between post-vaccination and baseline levels. We illustrate this with data from $N = 194$ participants in the COVID-19 Vaccine and Prevention Trials Network (CoVPN) 3008 Study \citep{garretthybrid}, who provided peripheral blood mononuclear cell (PBMC) samples. For each participant, we analyzed two paired control samples: one paired with the baseline PBMC sample and the other with the post-vaccination PBMC sample. The left panel of Figure \ref{fig: illustration} shows the estimated confidence intervals for the log odds ratio from the two control samples of $N = 194$ participants, based on a two-sided Fisher’s exact test. In over $20\%$ of cases, the test would conclude that the control sample (unstimulated) at baseline has a lower expression of positive cells compared to the post-vaccination sample. In total, $65$ out of $194$ ($33.5\%$) tests yielded $p$-values significant at the $0.05$ level. The right panel of Figure \ref{fig: illustration} compares the distribution of these $p$-values to a uniform $\text{Unif}[0,1]$ distribution, revealing a deviation from the theoretical quantiles that would be expected in the absence of a batch effect.  

In this article, we examine a scenario where researchers have access to both pre-vaccination (baseline) and post-vaccination ICS data for antigenically non-naive participants. These individuals have previously been exposed to the antigen, either through natural infection or prior vaccination, and may, therefore, exhibit immunity at baseline. Our key question is whether the vaccination boosts the T cell response, as reflected by the ICS data measured at baseline and post-vaccination. We propose a novel framework for adjusting the $p$-value derived from a naive comparison between a participant's post-vaccination and baseline immune response levels by incorporating batch effect information extracted from paired control samples. Two quantities are central to our proposed framework: the maximally adjusted $p$-value and the minimally adjusted $p$-value. The maximally adjusted $p$-value represents the worst-case $p$-value, taking into account all plausible assay misclassification rates informed by the paired control samples. The minimally adjusted $p$-value, on the other hand, refers to the $p$-value that is closest to the unadjusted $p$-value among those that cannot be falsified by the batch effect information derived from the paired control samples. The minimally and maximally adjusted $p$-values differ in how they incorporate paired control samples, representing two endpoints on the spectrum of evidence for vaccine responders. Specifically, the maximally adjusted $p$-value applies the most conservative adjustment to the naive $p$-value, ensuring strict type-I error rate control. In contrast, the minimally adjusted $p$-value applies the least adjustment necessary to ensure that the adjusted $p$-value remains consistent with the paired control samples, while the unadjusted naive $p$-value may violate the batch effects indicated by the control data. The maximally adjusted $p$-value is most suitable when test validity is of paramount importance, such as in regulatory contexts, while the minimally adjusted $p$-value is better suited for exploratory objectives. In many application scenarios, we recommend reporting both the minimally and maximally adjusted $p$-values. This approach balances type-I error rate control and statistical power in the presence of batch effects, providing more comprehensive information for empirical researchers.

The rest of the article is organized as follows: Section \ref{sec: data and notation} outlines the general problem setup and introduces the notation used in our analysis. Section \ref{sec: method} describes the proposed methods. Section \ref{sec: simulation} reports simulation studies. Section \ref{sec: case study} applies the method to data from the CoVPN 3008 Study to identify Omicron infections and vaccine responders. We conclude with a brief discussion in Section \ref{sec: discussion}.

\section{Data and notation}
\label{sec: data and notation}
\subsection{Observed count data}
\label{subsec: data}
At baseline ($T_0$), an assay measures and reports $N_0$ total cell counts and $n_0 < N_0$ positive cell counts from participant $i$'s sample. Below, we will use ``positive" to denote classification of a single cell by the assay as expressing a biomarker of interest. \emph{In the same assay run}, the assay measures and reports $N'_0$ total cells and $n'_0 < N'_0$ positive cells on a paired control sample. Common types of control samples include: 1) negative control: a sample assayed under unstimulated conditions, where no positive response is expected; 2) positive control: a sample that is expected to produce a known positive response, confirming that the experimental system is functioning as intended; 3) generic control: a sample whose response is not expected to change between two timepoints (e.g., from $T_0$ to $T_1$), used to verify that no unintended variations occur over time; and 4) a combination of the above types. 

At a post-vaccination timepoint $T_1$, the assay measures and reports $n_1$ positive cells out of $N_1$ total cells for participant $i$. Similarly, on the paired control sample at $T_1$, the assay reports $n'_1$ positive cells out of $N'_1$ total cells. The typical data structure for these measurements is summarized in Table \ref{tab:2x2 table illustration}.

\begin{table}[ht]
    \centering
    \begin{tabular}{cccccc} \hline
     &\multicolumn{2}{c}{$\boldsymbol{T_0}$ (Pre-Vaccination)} &\multicolumn{2}{c}{$\boldsymbol{T_1}$ (Post-Vaccination)}\\
        \textbf{Sample Type}  & Positive & Negative & Positive & Negative\\
        Primary sample &  $n_0$ & $N_0 - n_0$ &  $n_1$ & $N_1 - n_1$ \\
       Control sample & $n'_0$ & $N'_0 - n'_0$ & $n'_1$ & $N'_1 - n'_1$ \\ \hline
    \end{tabular}
    \caption{Data structure of a typical single-cell assay output like the intracellular cytokine staining (ICS) assay for one participant.}
    \label{tab:2x2 table illustration}
\end{table}

We aim to determine whether participant $i$ is a vaccine responder. It is important to note that all parameters, including assay misclassification rates and the true proportion of positive cells, are unique to each participant and timepoint and may vary between participants. For clarity, the dependence of these quantities on $i$ will be understood implicitly and omitted in the subsequent discussion.

To make a vaccine responder call, one simple analysis approach is to conduct a one-sided test comparing the proportion of positive cells at baseline ($T_0$), based on count data $(n_0, N_0)$, with the proportion at a post-vaccination timepoint ($T_1$), based on count data $(n_1, N_1)$. In a typical ICS assay run, multiple functional parameters related to responding T cells are assessed simultaneously in the PBMC sample collected from a participant at a given timepoint. Samples from two different timepoints (e.g., baseline and post-vaccination) are analyzed in separate assay runs. A common concern in this approach is that even a well-validated assay may misclassify cells: negative cells may be incorrectly identified as positive, or vice versa. More critically, the misclassification rates are likely to differ between timepoints (i.e., differential misclassification of outcome), which can introduce batch effects and confound the analysis (\citealp{finak2014mixture}).

\subsection{Assay misclassification rates}
\label{subsec: data and notation misclassification}
We formalize the batch effect as follows. Let $p_{T_0, 1 \mid 0}$ denote the false positive rate at timepoint $T_0$, and $p_{T_0, 0 \mid 1}$ the false negative rate at timepoint $T_0$. Similarly, let $p_{T_1, 1 \mid 0}$ denote the false positive rate at timepoint $T_1$, and $p_{T_1, 0 \mid 1}$ the false negative rate at timepoint $T_1$. In general, we expect that $p_{T_0, 0 \mid 1} \neq p_{T_1, 0 \mid 1}$ and $p_{T_0, 1 \mid 0} \neq p_{T_1, 1 \mid 0}$, reflecting potential differences in misclassification rates between the two timepoints. We assume that the cells measured within the same timepoint and assay run share the same misclassification rates. Therefore, count data $(n_0, N_0)$ and $(n'_0, N'_0)$, measured at baseline ($T_0$), share the same misclassification rates $p_{T_0, 1 \mid 0}$ and $p_{T_0, 0 \mid 1}$, while count data $(n_1, N_1)$ and $(n'_1, N'_1)$, measured post-vaccination ($T_1$), share the same misclassification rates $p_{T_1, 1 \mid 0}$ and $p_{T_1, 0 \mid 1}$. Let $\boldsymbol{\theta} = (p_{T_0, 1 \mid 0}, p_{T_0, 0 \mid 1}, p_{T_1, 1 \mid 0}, p_{T_1, 0 \mid 1})$ represent the vector of misclassification rates.

Let $p^C_{T_0, 1}$ and $p^C_{T_0, 0} = 1 - p^C_{T_0, 1}$ represent the true proportion of positive and negative cells in the control sample at baseline ($T_0$), respectively. Similarly, let $p^C_{T_1, 1}$ and $p^C_{T_1, 0} = 1 - p^C_{T_1, 1}$ represent the true proportion of positive and negative cells in the control sample at post-vaccination ($T_1$). For a generic control, we assume that $p^C_{T_0, 1} = p^C_{T_1, 1} \geq 0$, meaning the true proportion of positive cells in the control sample is the same at both timepoints. For a negative control sample, we have $p^C_{T_0, 1} = p^C_{T_1, 1} = 0$, meaning no positive cells are expected at either timepoint. We define $p^{C, \ast}_{T_0, 1}$ and $p^{C, \ast}_{T_1, 1}$ as the expected proportions of observed positive cells (subject to misclassification) in the control sample at $T_0$ and $T_1$ under the misclassification rates $\boldsymbol{\theta}$, respectively. It is straightforward to verify the following relationships among $p^{C}_{T_0, 1}$, $p^{C, \ast}_{T_0, 1}$, $p^{C}_{T_1, 1}$, $p^{C, \ast}_{T_1, 1}$, and misclassification rates $\boldsymbol{\theta}$:
 \begin{equation*}
        p^{C}_{T_0, 1}=\frac{p^{C, \ast}_{T_0, 1} - p_{T_0, 1\mid 0}}{1-p_{T_0, 0\mid 1}-p_{T_0, 1\mid 0}}~~~~\text{and}~~~~ p^{C}_{T_1, 1}=\frac{p^{C, \ast}_{T_1, 1} - p_{T_1, 1\mid 0}}{1 - p_{T_1, 0\mid 1}-p_{T_1, 1\mid 0}}.
\end{equation*}


Let $p_{T_0, 1}$ (or $p_{T_0, 0}$) and $p_{T_1, 1}$ (or $p_{T_1, 0}$) denote the true proportions of positive (or negative) cells in participant $i$'s primary sample at baseline ($T_0$) and post-vaccination ($T_1$), respectively. Let $p^{\ast}_{T_0, 1}$ (or $p^{\ast}_{T_0, 0}$) and $p^{\ast}_{T_1, 1}$ (or $p^{\ast}_{T_1, 0}$) represent the expected proportions of observed positive (or negative) cells (subject to misclassification) in the primary sample at $T_0$ and $T_1$, respectively, under the misclassification rates $\boldsymbol{\theta}$. Similarly, the following relationships among $p_{T_0, 1}$, $p^{\ast}_{T_0, 1}$, $p_{T_1, 1}$, $p^{\ast}_{T_1, 1}$, and misclassification rates $\boldsymbol{\theta}$ hold (\citealp{magder1997logistic}): 
\begin{equation*}
        p_{T_0, 1}=\frac{p^{\ast}_{T_0, 1} - p_{T_0, 1\mid 0}}{1-p_{T_0, 0\mid 1}-p_{T_0, 1\mid 0}}~~~~\text{and}~~~~ p_{T_1, 1}=\frac{p^{\ast}_{T_1, 1} - p_{T_1, 1\mid 0}}{1 - p_{T_1, 0\mid 1}-p_{T_1, 1\mid 0}}.
    \end{equation*}

\section{Incorporating the paired control data in determining a vaccine responder}\label{sec: method}
\subsection{Quantifying misclassification rates; vaccine responder $p$-value under misclassification}
\label{subsec: method quantify misclassification rates}

The first step in calculating either the maximally or minimally adjusted $p$-value is to use the observed paired control samples at $T_0$ and $T_1$ to construct an asymptotically valid confidence set for the misclassification rates. We first consider the setting with a generic control such that $p^C_{T_0, 1} = p^C_{T_1, 1} \geq 0$. We first examine two generic control samples at $T_0$ and $T_1$. Under fixed misclassification rates $\boldsymbol{\theta}$, we define the following test statistic: 
\begin{equation}\label{eq: T^C}
      T^{C}=\widehat{p}^{C}_{T_1, 1}-\widehat{p}^{C}_{T_0, 1} = \frac{\widehat{p}^{C, \ast}_{T_1, 1}-p_{T_1, 1\mid 0}}{1-p_{T_1, 0 \mid 1} - p_{T_1, 1 \mid 0}} - \frac{\widehat{p}^{C, \ast}_{T_0, 1}-p_{T_0, 1\mid 0}}{1- p_{T_0, 0 \mid 1} - p_{T_0, 1 \mid 0}},
\end{equation}
  where $\widehat{p}^{C, \ast}_{T_1, 1} = n'_1/N'_1$ and $\widehat{p}^{C, \ast}_{T_0, 1} = n'_0/N'_0$ denote the sample proportion of positive cells in the generic control samples at $T_1$ and $T_0$, respectively. Because we know $p^{C}_{T_0, 1} = p^{C}_{T_1, 1}$ \emph{a priori}, observed count data in the paired control samples can be used to construct a confidence set for the misclassification rates $\boldsymbol{\theta}$ at $T_0$ and $T_1$, as stated in Proposition~\ref{prop: Test for generic control}. 

\begin{proposition}\label{prop: Test for generic control}
Assume that the generic control samples are independent and identically distributed at both $T_{0}$ and $T_{1}$ and are independent across $T_{0}$ and $T_{1}$. Under invariance of true proportions of positive
cells for paired generic control sample at $T_0$ and $T_1$ (i.e., $p^{C}_{T_0, 1} = p^{C}_{T_1, 1}$) and each fixed vector of misclassification rates $\boldsymbol{\theta} = (p_{T_0, 1 \mid 0}, p_{T_0, 0 \mid 1}, p_{T_1, 1 \mid 0}, p_{T_1, 0 \mid 1})$, we have
\begin{equation*}
      \overline{T}^{C}(\boldsymbol{\theta}) =  \frac{T^C}{\sqrt{\widehat{p}^{C}_{1}(1-\widehat{p}^{C}_{1}) \Big(\frac{1}{N_{1}^{\prime}}+\frac{1}{N_{0}^{\prime}}\Big) } }\xrightarrow{d} N(0,1) \text{ as $\min\{N_{0}^{\prime}, N_{1}^{\prime}\}\rightarrow \infty$},
\end{equation*}
where $T^C$ is given in \eqref{eq: T^C}, and
\begin{equation*}
\widehat{p}^{C}_{1}=\frac{N_{1}^{\prime}}{N_{0}^{\prime}+N_{1}^{\prime}}\times \frac{\widehat{p}^{C, \ast}_{T_1, 1}-p_{T_1, 1\mid 0}}{1-p_{T_1, 0 \mid 1}-p_{T_1, 1 \mid 0}} + \frac{N_{0}^{\prime}}{N_{0}^{\prime}+N_{1}^{\prime}}\times \frac{\widehat{p}^{C, \ast}_{T_0, 1}-p_{T_0, 1\mid 0}}{1-p_{T_0, 0 \mid 1}-p_{T_0, 1 \mid 0}}.
\end{equation*}
Then, consider the confidence set $\mathcal{A}(\alpha)=\{ \boldsymbol{\theta}_{0}\in [0,1]^{4}:  |\overline{T}^{C}(\boldsymbol{\theta}_{0})|\leq  \Phi^{-1}(1-\alpha/2)\}$ for misclassification rates $\boldsymbol{\theta}$, in which $\Phi$ is the distribution function of $N(0,1)$. As $\min\{N_{0}^{\prime}, N_{1}^{\prime}\}\rightarrow \infty$, the coverage rate of $\mathcal{A}(\alpha)$ is asymptotically no less than $100\times(1-\alpha)\%$. 

\end{proposition}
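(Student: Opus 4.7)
The plan is to establish asymptotic normality of $\overline{T}^{C}(\boldsymbol{\theta})$ under a fixed true $\boldsymbol{\theta}$ via a CLT-plus-Slutsky argument, and then to obtain the confidence-set coverage by inverting the two-sided test. Throughout, $\boldsymbol{\theta}$ is held fixed and the observed control counts $(n'_0, n'_1)$ are treated as independent binomial samples with parameters $(N'_j, p^{C,\ast}_{T_j,1})$ for $j \in \{0,1\}$, as follows from the i.i.d.\ and independence assumptions on the paired control samples.

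First I would apply the CLT to each sample proportion to get $\sqrt{N'_j}(\widehat{p}^{C,\ast}_{T_j,1} - p^{C,\ast}_{T_j,1}) \xrightarrow{d} N(0, p^{C,\ast}_{T_j,1}(1-p^{C,\ast}_{T_j,1}))$. Because $\widehat{p}^{C}_{T_j,1}$ is the fixed affine map $(\widehat{p}^{C,\ast}_{T_j,1} - p_{T_j,1\mid 0})/a_j$ with $a_j := 1 - p_{T_j,0\mid 1} - p_{T_j,1\mid 0}$, each $\widehat{p}^{C}_{T_j,1}$ is asymptotically normal around $p^{C}_{T_j,1}$ with variance scaled by $a_j^{-2}$. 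Under the invariance $p^{C}_{T_0,1} = p^{C}_{T_1,1} =: p^{C}_1$, the numerator $T^{C}$ is asymptotically mean-zero, and independence across time points makes its asymptotic variance additive over $j$. For the denominator, the pooled estimator $\widehat{p}^{C}_1$ is a convex combination of two consistent estimators of $p^{C}_1$, so $\widehat{p}^{C}_1 \to p^{C}_1$ in probability, and by continuous mapping $\widehat{p}^{C}_1(1-\widehat{p}^{C}_1) \to p^{C}_1(1-p^{C}_1)$. Slutsky's theorem then yields $\overline{T}^{C}(\boldsymbol{\theta}) \xrightarrow{d} N(0,1)$.

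For the confidence set, test inversion gives the result directly: by construction, the true misclassification vector $\boldsymbol{\theta}$ lies in $\mathcal{A}(\alpha)$ if and only if $|\overline{T}^{C}(\boldsymbol{\theta})| \leq \Phi^{-1}(1-\alpha/2)$, so the limiting $N(0,1)$ distribution gives $\mathbb{P}(\boldsymbol{\theta} \in \mathcal{A}(\alpha)) \to 1-\alpha$. The step I expect to require the most care is the variance reduction: the exact delta-method asymptotic variance of $T^{C}$ is $\sum_{j \in \{0,1\}} p^{C,\ast}_{T_j,1}(1-p^{C,\ast}_{T_j,1})/(N'_j a_j^2)$, which does not in general coincide with the pooled expression $p^{C}_1(1-p^{C}_1)(1/N'_0 + 1/N'_1)$ used as the studentizer in $\overline{T}^{C}$. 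Reconciling the two likely calls for either a mild-misclassification regime, in which the two expressions agree to leading order, or an argument that the pooled expression serves as an asymptotic upper bound for the delta-method variance, which would align with the proposition's ``no less than $1-\alpha$'' coverage phrasing.
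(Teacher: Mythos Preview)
Your CLT-plus-Slutsky skeleton is the standard route and matches the paper's approach: apply the binomial CLT to $\widehat{p}^{C,\ast}_{T_j,1}$, push through the fixed affine map to get asymptotic normality of $\widehat{p}^{C}_{T_j,1}$ around the common $p^{C}_{1}$, use independence across $j$ for additivity, and invoke consistency of the pooled $\widehat{p}^{C}_{1}$ together with Slutsky. The confidence-set claim then follows by the usual test-inversion argument, and the ``no less than $1-\alpha$'' phrasing is just the standard conservative wording for inverted two-sided tests, not a hint at a variance bound.

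Where your proposal needs correcting is option~(b) for handling the variance discrepancy you (correctly) flag. The inequality goes the wrong way. In the regime relevant to the paper---very small true positive proportions and nonnegligible false positive rates---one has $p^{C,\ast}_{T_j,1}=a_j\,p^{C}_{1}+p_{T_j,1\mid 0}$, which typically exceeds $p^{C}_{1}$; since $x(1-x)$ is increasing on $[0,1/2]$ and dividing by $a_j^{2}\leq 1$ only inflates further, the exact delta-method variance $p^{C,\ast}_{T_j,1}(1-p^{C,\ast}_{T_j,1})/(N'_j a_j^{2})$ is \emph{larger} than $p^{C}_{1}(1-p^{C}_{1})/N'_j$. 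An understated denominator makes $|\overline{T}^{C}|$ too large, the test anti-conservative, and the confidence set too small---the opposite of the ``$\geq 1-\alpha$'' direction you were hoping for. So (b) cannot close the gap. The honest reading is your option~(a): the exact $N(0,1)$ limit holds as stated only at $\boldsymbol{\theta}=\boldsymbol{0}$ (where the statistic reduces to the textbook pooled two-proportion $z$), and for general $\boldsymbol{\theta}$ the result should be understood as a leading-order approximation under small misclassification rates, which is the implicit working assumption throughout the paper.
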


\begin{proof}
    All proofs in the article can be found in the online supplemental materials.
\end{proof}

\begin{remark}[Paired negative control samples]
   If the paired controls are negative controls, such that $p^C_{T_0, 1} = p^C_{T_1, 1} = 0$, then the observed negative control data contain no information about the false negative rate at either $T_0$ or $T_1$, because negative cells in the paired negative controls are always true negatives. In this case, two assay false negative rates should be regarded as sensitivity parameters. The confidence set $\mathcal{A}(\alpha)$ for $\boldsymbol{\theta}$ then reduces to a confidence set for two assay false positive rates, which can be constructed as the union of 1) the $100\times(1-\alpha/2)\%$ confidence set for $p_{T_0, 1 \mid 0}$ at $T_0$, based on $(n'_0, N'_0)$; and 2) the $100\times(1-\alpha/2)\%$ confidence set for $p_{T_1, 1 \mid 0}$ at $T_1$, based on $(n'_1, N'_1)$.
\end{remark}

Next, we focus on the primary samples at $T_0$ and $T_1$. A study participant is considered a \emph{vaccine responder} at level-$\alpha$ if the null hypothesis
\begin{equation}
    \label{eq: primary null}
    H_0: p_{T_1, 1} = p_{T_0, 1}
\end{equation}
is rejected at level-$\alpha$ against the one-sided alternative $H_1: p_{T_1, 1} > p_{T_0, 1}$. Specifically, under fixed misclassification rates $\boldsymbol{\theta}$, we can test the null hypothesis $H_0$ using the following test statistic $T$:
  \begin{equation}\label{eq: T}
      T = \widehat{p}_{T_1, 1}-\widehat{p}_{T_0, 1} = \frac{\widehat{p}^{\ast}_{T_1, 1} - p_{T_1, 1\mid 0}}{1-p_{T_1, 0\mid 1}-p_{T_1, 1\mid 0}} -\frac{\widehat{p}^{\ast}_{T_0, 1} - p_{T_0, 1\mid 0}}{1-p_{T_0, 0\mid 1}-p_{T_0, 1\mid 0}},
  \end{equation}
  where $\widehat{p}^{\ast}_{T_1, 1} = n_1/N_1$ and $\widehat{p}^{\ast}_{T_0, 1} = n_0/N_0$ denote the sample proportion of positive cells in participant $i$'s primary sample at $T_1$ and that at $T_0$, respectively.

\begin{proposition}
\label{prop: study sample}
Assume that the primary assay samples (i.e., study samples) are independent and identically distributed at both $T_{0}$ and $T_{1}$ and are independent across $T_{0}$ and $T_{1}$. Under $H_{0}$ and each fixed misclassification rates $\boldsymbol{\theta}$, we have 
    \begin{equation*}
      \overline{T}(\boldsymbol{\theta})=  \frac{T}{\sqrt{\widehat{p}_{1}(1-\widehat{p}_{1}) \Big(\frac{1}{N_{1}}+\frac{1}{N_{0}}\Big) } }\xrightarrow{d} N(0,1) \text{ as $\min\{N_{0}, N_{1}\}\rightarrow \infty$},
    \end{equation*}
    where $T$ is given in \eqref{eq: T}, and
    \begin{equation*}
        \widehat{p}_{1}=\frac{N_{1}}{N_{0}+N_{1}}\times \frac{\widehat{p}^{\ast}_{T_1, 1}-p_{T_1, 1\mid 0}}{1-p_{T_1, 0\mid 1}-p_{T_1, 1\mid 0} }+\frac{N_{0}}{N_{0}+N_{1}}\times \frac{\widehat{p}^{\ast}_{T_0, 1}-p_{T_0, 1\mid 0}}{1-p_{T_0, 0\mid 1}-p_{T_0, 1\mid 0}}.
    \end{equation*}
\end{proposition}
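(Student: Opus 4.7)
The plan is to mirror the argument for Proposition~\ref{prop: Test for generic control}, now applied to the primary samples $(n_0, N_0)$ and $(n_1, N_1)$ rather than the paired controls. Since the misclassification vector $\boldsymbol{\theta}$ is held fixed, the statistic $T$ in \eqref{eq: T} is a known affine function of the two independent sample proportions $\widehat{p}^{\ast}_{T_0, 1} = n_0/N_0$ and $\widehat{p}^{\ast}_{T_1, 1} = n_1/N_1$, each of which is the average of $N_j$ i.i.d.\ Bernoulli$(p^{\ast}_{T_j, 1})$ indicators under the i.i.d.\ sampling assumption.

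First, I would establish a marginal central limit theorem (CLT) for each corrected proportion. The classical CLT applied to the Bernoulli sample mean gives $\sqrt{N_j}\bigl(\widehat{p}^{\ast}_{T_j,1} - p^{\ast}_{T_j,1}\bigr) \xrightarrow{d} N\bigl(0,\, p^{\ast}_{T_j,1}(1-p^{\ast}_{T_j,1})\bigr)$, and because the misclassification correction $\widehat{p}_{T_j,1} = (\widehat{p}^{\ast}_{T_j,1} - p_{T_j, 1\mid 0})/(1 - p_{T_j, 0\mid 1} - p_{T_j, 1\mid 0})$ is affine with deterministic coefficients, it follows that $\sqrt{N_j}\bigl(\widehat{p}_{T_j,1} - p_{T_j,1}\bigr) \xrightarrow{d} N(0, \sigma_j^2)$ with $\sigma_j^2 = p^{\ast}_{T_j,1}(1-p^{\ast}_{T_j,1})/(1-p_{T_j, 0\mid 1}-p_{T_j, 1\mid 0})^2$. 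Independence of the baseline and post-vaccination samples then yields the joint asymptotic normality of $T = \widehat{p}_{T_1,1} - \widehat{p}_{T_0,1}$.

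Second, under $H_0$ let $p := p_{T_0, 1} = p_{T_1, 1}$. Then $E[T] = 0$ using the relations at the end of Section~\ref{subsec: data and notation misclassification}, and each $\widehat{p}_{T_j,1}$ is consistent for $p$. Consequently the weighted average $\widehat{p}_1$ in the statement is consistent for $p$ under $H_0$ and $\widehat{p}_1(1 - \widehat{p}_1) \xrightarrow{p} p(1-p)$. Applying Slutsky's theorem to combine the asymptotic normality of $T$ with the probability limit of the denominator completes the derivation, exactly paralleling the structure of the proof of Proposition~\ref{prop: Test for generic control} for the generic control.

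The step I expect to be the main delicacy is matching the true asymptotic variance of $T$, namely $\sigma_1^2/N_1 + \sigma_0^2/N_0$, with the plug-in variance $\widehat{p}_1(1-\widehat{p}_1)(1/N_1 + 1/N_0)$ in the denominator of $\overline{T}(\boldsymbol{\theta})$. These two expressions coincide exactly only when the misclassification rates vanish; for general $\boldsymbol{\theta}$ I would follow the pooled-proportion convention used in the analogous two-sample test of proportions, treating $\widehat{p}_{T_j,1}$ as an estimator with approximate variance $p_{T_j,1}(1-p_{T_j,1})/N_j$ under $H_0$, and flag this choice explicitly where it enters the argument.
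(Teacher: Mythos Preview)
Your overall architecture---marginal CLT for each $\widehat{p}^{\ast}_{T_j,1}$, affine pass-through to $\widehat{p}_{T_j,1}$, independence across timepoints, then Slutsky with the pooled estimate $\widehat p_1$---is exactly the route the paper's argument for Proposition~\ref{prop: Test for generic control} takes, and Proposition~\ref{prop: study sample} is designed to be proved by the same template. So structurally you are aligned with the paper.

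Where you should sharpen your write-up is the point you yourself flag at the end. It is not a cosmetic ``delicacy''; it is the one place where the argument can fail to deliver the stated $N(0,1)$ limit. Under $H_0$ with $p:=p_{T_0,1}=p_{T_1,1}$ and $a_j:=1-p_{T_j,0\mid 1}-p_{T_j,1\mid 0}$, the exact asymptotic variance of $T$ is
\[
\frac{p^{\ast}_{T_1,1}(1-p^{\ast}_{T_1,1})}{a_1^2\,N_1}+\frac{p^{\ast}_{T_0,1}(1-p^{\ast}_{T_0,1})}{a_0^2\,N_0},
\]
whereas the denominator of $\overline{T}(\boldsymbol{\theta})$ converges to $p(1-p)(1/N_1+1/N_0)$. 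Writing $p^{\ast}_{T_j,1}=a_j p + p_{T_j,1\mid 0}$, one checks that $p^{\ast}_{T_j,1}(1-p^{\ast}_{T_j,1})/a_j^2=p(1-p)$ holds for all $p$ only when $a_j=1$ and $p_{T_j,1\mid 0}=0$, i.e.\ when $\boldsymbol{\theta}=\boldsymbol{0}$. Your proposed fix---``treat $\widehat p_{T_j,1}$ as having variance $p(1-p)/N_j$''---is therefore an approximation, not a proof step; it is accurate in the small-misclassification regime the paper operates in, but it does not by itself yield an exact $N(0,1)$ limit for general fixed $\boldsymbol{\theta}$. When you write this up, state explicitly that Slutsky gives $\overline{T}(\boldsymbol{\theta})\xrightarrow{d} N(0,c(\boldsymbol{\theta}))$ with the variance ratio $c(\boldsymbol{\theta})$ above, note $c(\boldsymbol{0})=1$, and, if you want the statement as written, add the assumption that the misclassification rates are asymptotically negligible (or replace the pooled denominator by a consistent estimator of the displayed variance). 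That is the honest version of the ``flag'' you already planned to raise.
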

\noindent  Under Proposition~\ref{prop: study sample}, an asymptotically valid one-sided $p$-value for testing $H_{0}$ against $H_1$ given true misclassification rates $\boldsymbol{\theta}$ is $p_{\boldsymbol{\theta}}= 1-\Phi(\overline{T}(\boldsymbol{\theta}))$. In the absence of cell misclassification (i.e., when \(\boldsymbol{\theta} = \boldsymbol{0}\)), Proposition \ref{prop: study sample} reduces to a standard two-sample proportion testing procedure that reports an unadjusted $p$-value, denoted as $p^{*}$, ignoring the potential for cell misclassification. 

\subsection{Combining Propositions~\ref{prop: Test for generic control} and \ref{prop: study sample}: three motivating examples}
\label{subsec: three examples}

How can we leverage the confidence set $\mathcal{A}(\alpha)$ of misclassification rates $\boldsymbol{\theta}$ in Proposition~\ref{prop: Test for generic control}, derived from the paired control data, to inform decision-making in the vaccine responder analysis presented in Proposition~\ref{prop: study sample}? First, for values of $\boldsymbol{\theta}$ that are not in $\mathcal{A}(\alpha)$, we consider them incompatible with the observed data. Therefore, $p$-values derived from comparing primary responses at $T_1$ vs. $T_0$ under $\boldsymbol{\theta} \notin \mathcal{A}(\alpha)$ should not be considered or used to inform decision-making. The remaining question is how to incorporate the confidence set $\mathcal{A}(\alpha)$ --- and the vaccine responder $p$-values corresponding to misclassification rates in $\mathcal{A}(\alpha)$ --- to adjust the naive $p$-value that ignores the potential assay misclassification rates. 

To motivate our proposed framework to answer this question, we present three concrete motivating examples. Table \ref{tab: case study example special case} summarizes the observed case counts in the primary and paired control samples for three participants. For ease of illustration, we temporarily assume that the assay false negative rates at both $T_0$ and $T_1$ are zero, that is, $(p_{T_0, 0 \mid 1}, p_{T_1, 0 \mid 1}) = (0, 0)$. This assumption allows us to focus on the false positive rates, $(p_{T_0, 1 \mid 0}, p_{T_1, 1 \mid 0})$, and simplifies the visualization of the analysis.

\begin{table}[ht]
    \centering
    \begin{tabular}{ccccccc} \hline
     &\multicolumn{2}{c}{$T_0$ (Baseline)} & &\multicolumn{2}{c}{$T_1$ (Post-vaccination)}\\ \cline{2-3} \cline{5-6}
      \textbf{Participant 1}  & Positive & Negative && Positive & Negative\\
        Participant $1$'s sample &  $31$ & $69{,}540 - 31$  &&  $85$ & $93{,}562 - 85$ \\
       Paired control sample& $8$ & $93{,}883 - 8$ && $43$ & $212{,}650 - 43$ \\ \hline
       \textbf{Participant 2}  & Positive & Negative && Positive & Negative\\
        Participant $2$'s sample &  $31$ & $69{,}540 - 31$ &&  $85$ & $93{,}562 - 85$ \\
       Paired control sample & $8$ & $93{,}883 - 8$ && $15$ & $212{,}650 - 15$ \\ \hline
       \textbf{Participant 3}  & Positive & Negative && Positive & Negative\\
        Participant $3$'s sample &  $31$ & $69{,}540 - 31$ &&  $85$ & $93{,}562 - 85$ \\
       Paired control sample& $8$ & $93{,}883 - 8$ && $2$ & $212{,}650 - 2$ \\ \hline
    \end{tabular}
    \caption{Three examples of patient-level ICS assay readouts for the primary sample and the paired control sample at $T_0$ and $T_1$.}
    \label{tab: case study example special case}
\end{table}

For Participant 1, the paired control samples suggest evidence supporting a larger false positive rate at $T_1$ compared to that at $T_0$. The shaded area in the left panel of Figure \ref{fig: case study special} corresponds to the 95\% confidence set $\mathcal{A}(0.05)$ obtained according to Proposition \ref{prop: Test for generic control}, and the vaccine responder $p$-values corresponding to misclassification rates in $\mathcal{A}(0.05)$ range from $4\times 10^{-4}$ to $8.9\times 10^{-3}$ (we denote this range of $p$-values as $\mathcal{P}_{\mathcal{A}(0.05)}$). For Participant $1$, $(p_{T_0, 1\mid 0}, p_{T_1, 1\mid 0}) = (0, 0)$ is not contained in $\mathcal{A}(0.05)$, and an unadjusted analysis assuming $(p_{T_0, 1\mid 0}, p_{T_1, 1\mid 0}) = (0, 0)$ would yield a $p$-value of $3\times 10^{-4}$, which is smaller than the smallest $p$-value in $\mathcal{P}_{\mathcal{A}(0.05)}$. For this participant, a vaccine-responder call based on the unadjusted $p$-value does not acknowledge the ample evidence supporting a batch effect. 

For Participant 2, the paired control samples do not provide evidence of differential assay false positive rates at $T_1$ compared to $T_0$, as shown in the middle panel of Figure \ref{fig: case study special}. The confidence set $\mathcal{A}(0.05)$ includes the combination $(p_{T_0, 1 \mid 0}, p_{T_1, 1 \mid 0}) = (0, 0)$. The vaccine responder $p$-values range from $2 \times 10^{-5}$ to $6 \times 10^{-4}$, which includes the unadjusted $p$-value of $3 \times 10^{-4}$. 

For Participant 3, the paired control samples suggest evidence of a larger false positive rate at $T_0$ compared to $T_1$, as depicted in the right panel of Figure \ref{fig: case study special}. In this case, the confidence set $\mathcal{A}(0.05)$ does not include the combination $(p_{T_0, 1 \mid 0}, p_{T_1, 1 \mid 0}) = (0, 0)$. The vaccine responder $p$-values corresponding to misclassification rates in $\mathcal{A}(0.05)$ range from $1 \times 10^{-5}$ to $6 \times 10^{-5}$, all of which are more significant than the unadjusted $p$-value of $3 \times 10^{-4}$. 


\begin{figure}[ht]
    \centering
   \begin{minipage}{.33\textwidth}
        \centering
\includegraphics[width=0.98\linewidth]{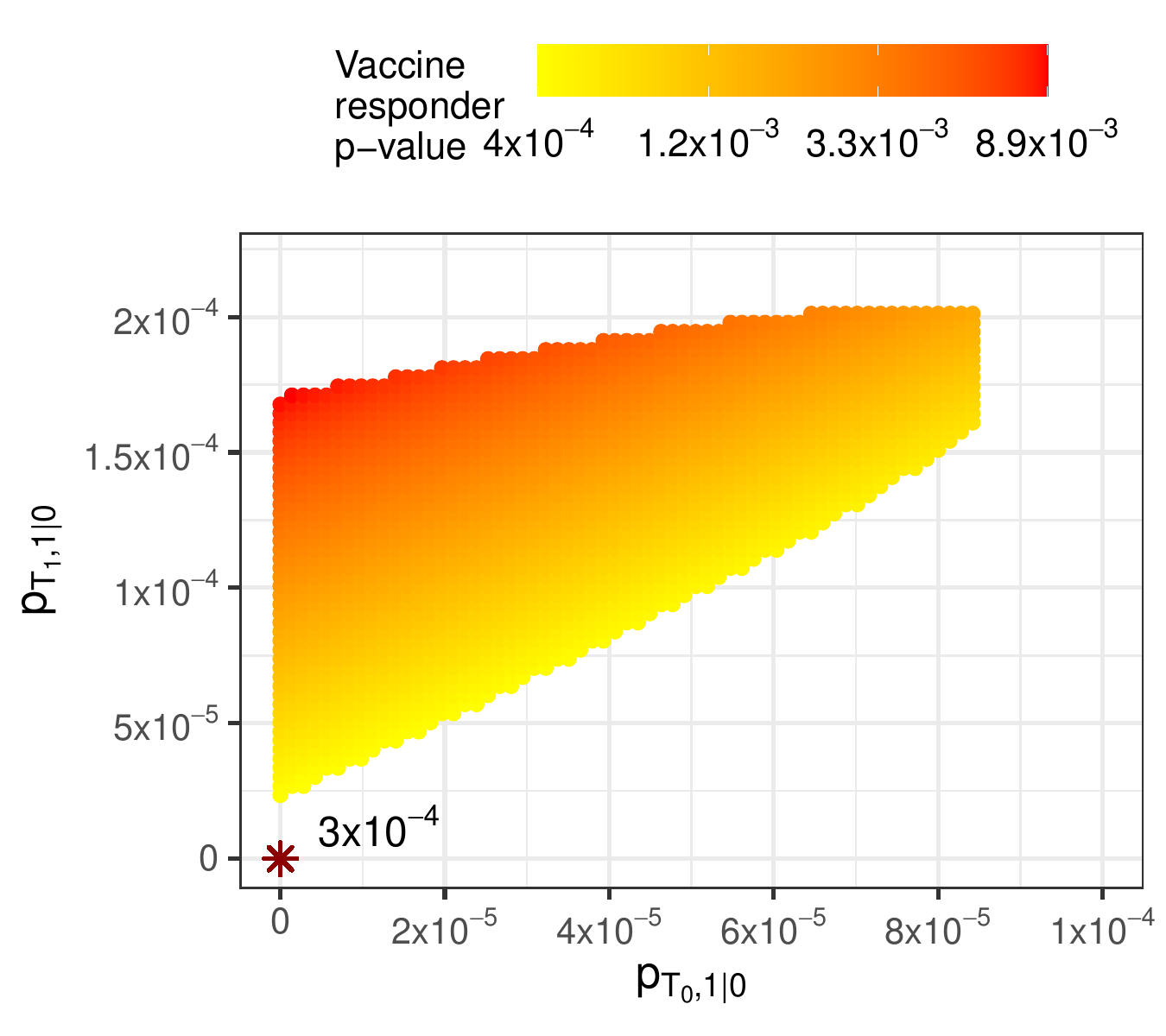}
    \end{minipage}%
    \begin{minipage}{0.33\textwidth}
        \centering
\includegraphics[width=0.98\linewidth]{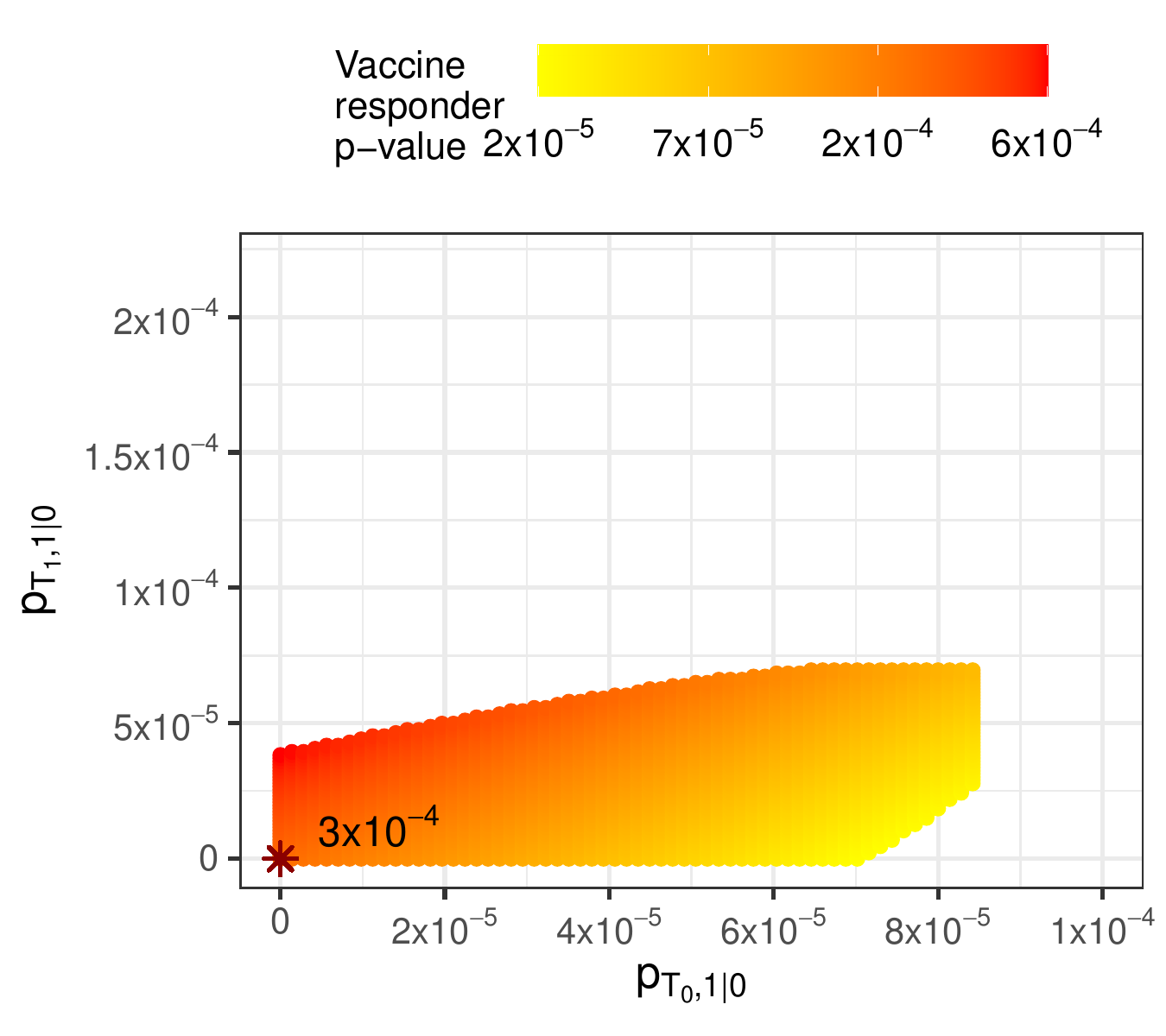}
    \end{minipage}
     \begin{minipage}{0.33\textwidth}
        \centering
\includegraphics[width=0.98\linewidth]{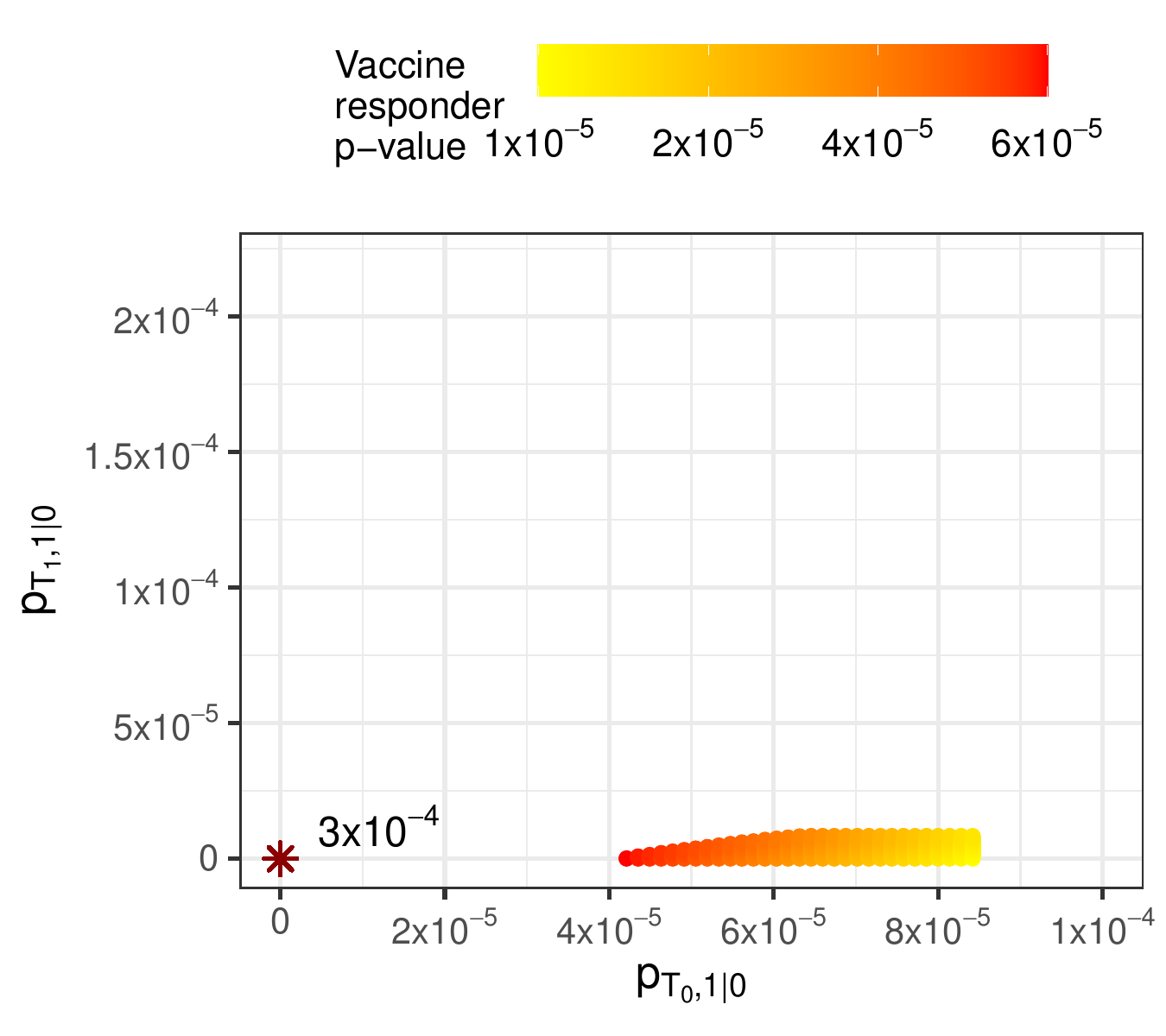}
    \end{minipage}
    \caption{Confidence set of $(p_{T_0, 1\mid 0}, p_{T_1, 1\mid 0})$ and vaccine responder $p$-values corresponding to different combinations of $(p_{T_0, 1\mid 0}, p_{T_1, 1\mid 0})$ in Example 1 (left panel), Example 2 (middle panel), and Example 3 (right panel). The unadjusted vaccine responder $p$-value, corresponding to $(p_{T_0, 1\mid 0}, p_{T_1, 1\mid 0}) = (0, 0)$, equals $3\times 10^{-4}$. }
    \label{fig: case study special}
\end{figure}

\subsection{The proposed framework: maximally and minimally adjusted $p$-values}
\label{subsec: method minimally adjusted p-value}

One strategy to account for the confidence set $\mathcal{A}(\alpha)$ for the misclassification rates $\boldsymbol{\theta}$ (constructed in Proposition~\ref{prop: Test for generic control}) is to maximize the $p$-value for testing $H_{0}$ (see Proposition~\ref{prop: study sample}) over all $\boldsymbol{\theta}\in \mathcal{A}(\alpha)$. We refer to this maximal $p$-value as the \emph{maximally adjusted $p$-value}, and its formal definition is provided in Definition~\ref{def: BB} below.


\begin{definition}[Maximally adjusted $p$-value]\label{def: BB}
    Let $\mathcal{A}(\alpha')$ denote a $100\times(1-\alpha')\%$ confidence set of the assay misclassification rates $\boldsymbol{\theta}$ based on the paired control data, as constructed in Proposition~\ref{prop: Test for generic control}, and $p_{\boldsymbol{\theta}}$ the vaccine responder $p$-value under a fixed $\boldsymbol{\theta}$, as derived in Proposition~\ref{prop: study sample}. Define $\mathcal{P}_{\mathcal{A}(\alpha')} = \{p_{\boldsymbol{\theta}}: \boldsymbol{\theta} \in \mathcal{A}(\alpha')\}$. Then, the maximally adjusted $p$-value is
    \begin{equation*}
        p_{\text{max}}^{\text{adj}}=\sup \mathcal{P}_{\mathcal{A}(\alpha')}+\alpha^{\prime}.
    \end{equation*}
\end{definition}
\noindent Similar to \citeauthor{berger1994p}' argument (\citealp{berger1994p}), the maximally adjusted $p$-value $p_{\text{max}}^{\text{adj}}$ can be shown to be asymptotically valid without knowing the true misclassification rates $\boldsymbol{\theta}$.

\begin{theorem}\label{thm: validity of max adj p}(Validity of the maximally adjusted $p$-value) Under the same setting as those in Propositions~\ref{prop: Test for generic control} and \ref{prop: study sample} and the assumption that the set $A(\alpha^{\prime})$ is compact, for any prespecified significance level $\alpha\in (0,1)$, and for any prespecified $\alpha^{\prime}<\alpha$, as $\min\{N_{0}^{\prime}, N_{1}^{\prime}\}\rightarrow \infty$, we have $\lim \sup P(p_{\text{max}}^{\text{adj}}\leq \alpha \mid H_{0})\leq \alpha$. 
\end{theorem}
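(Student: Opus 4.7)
The plan is to adapt the classical Berger--Boos argument to the present setting, in which the unknown nuisance parameter is the vector $\boldsymbol{\theta}$ of assay misclassification rates and we have an asymptotically valid $(1-\alpha')$-level confidence set $\mathcal{A}(\alpha')$ for it from Proposition~\ref{prop: Test for generic control}. Denote the true misclassification vector by $\boldsymbol{\theta}^{\ast}$. The core identity I would exploit is that whenever $\boldsymbol{\theta}^{\ast}\in\mathcal{A}(\alpha')$, the oracle $p$-value $p_{\boldsymbol{\theta}^{\ast}}$ lies in $\mathcal{P}_{\mathcal{A}(\alpha')}$, so by the definition of the supremum
\begin{equation*}
p_{\boldsymbol{\theta}^{\ast}} \;\leq\; \sup \mathcal{P}_{\mathcal{A}(\alpha')} \;=\; p_{\text{max}}^{\text{adj}}-\alpha'.
\end{equation*}
Hence, on the event $\{\boldsymbol{\theta}^{\ast}\in\mathcal{A}(\alpha')\}$, the inequality $p_{\text{max}}^{\text{adj}}\leq\alpha$ forces $p_{\boldsymbol{\theta}^{\ast}}\leq\alpha-\alpha'$.

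From here I would split the rejection probability according to whether $\boldsymbol{\theta}^{\ast}$ is covered by $\mathcal{A}(\alpha')$:
\begin{equation*}
P\bigl(p_{\text{max}}^{\text{adj}}\leq\alpha\mid H_0\bigr) \;\leq\; P\bigl(p_{\boldsymbol{\theta}^{\ast}}\leq\alpha-\alpha',\ \boldsymbol{\theta}^{\ast}\in\mathcal{A}(\alpha')\mid H_0\bigr)+P\bigl(\boldsymbol{\theta}^{\ast}\notin \mathcal{A}(\alpha')\mid H_0\bigr).
\end{equation*}
The first term is bounded above by $P(p_{\boldsymbol{\theta}^{\ast}}\leq\alpha-\alpha'\mid H_0)$, which tends to at most $\alpha-\alpha'$ because Proposition~\ref{prop: study sample} guarantees $p_{\boldsymbol{\theta}^{\ast}}$ is asymptotically super-uniform under $H_0$ at the true $\boldsymbol{\theta}^{\ast}$. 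The second term is bounded asymptotically by $\alpha'$ via the coverage statement in Proposition~\ref{prop: Test for generic control}. Taking $\limsup$ as $\min\{N_0, N_1, N_0', N_1'\}\to\infty$ and summing yields $\alpha$, which is exactly the desired conclusion.

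The compactness assumption on $\mathcal{A}(\alpha')$ enters only to ensure that the supremum in the definition of $p_{\text{max}}^{\text{adj}}$ is well-defined (and, if desired, attained), so that the string of inequalities above is rigorous rather than a limit of approximating maximizers. I would also note that independence between the primary samples and the paired control samples is used implicitly when the two asymptotic bounds are combined additively; since Propositions~\ref{prop: Test for generic control} and~\ref{prop: study sample} concern disjoint samples, this is immediate from the i.i.d.\ assumptions already stated.

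The main obstacle I expect is a conceptual rather than technical one: verifying that the confidence-set coverage statement from Proposition~\ref{prop: Test for generic control}, which is only asymptotic in $\min\{N_0', N_1'\}$, can be safely combined with the asymptotic validity of $p_{\boldsymbol{\theta}^{\ast}}$ from Proposition~\ref{prop: study sample}, which is asymptotic in $\min\{N_0, N_1\}$. The clean way to handle this is to take the joint limit along any sequence where all four sample sizes diverge and use $\limsup$ on each term separately, which is why the theorem is stated with $\limsup$ rather than equality. Everything else reduces to the Berger--Boos decomposition outlined above.
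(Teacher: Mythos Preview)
Your proposal is correct and follows precisely the Berger--Boos decomposition that the paper itself invokes (it explicitly cites \citet{berger1994p} immediately before stating the theorem), so the approach is essentially the same as the paper's. The only minor remark is that compactness is not needed for the supremum of a set of $p$-values in $[0,1]$ to be well-defined; it is more naturally used to guarantee that the supremum is attained (so that one can speak of a maximizing $\boldsymbol{\theta}$), but this does not affect the validity of your inequality chain.
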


Reporting $p_{\text{max}}^{\text{adj}}$ is most suitable in two scenarios. First, it is suitable when the paired control data provide informative insights into the misclassification rates (i.e., when the confidence set $A(\alpha^{\prime})$ is informative). Second, it is appropriate when a valid $p$-value is critical for decision-making, even at the cost of statistical power.

On the other hand, in some experimental settings, paired control data --- especially those from generic controls --- are primarily used to monitor changes in experimental conditions between two assay runs (e.g., via using a Levey-Jennings control chart; see \citet{levey1950use}) rather than to accurately estimate assay misclassification rates. In such cases, solely reporting the maximally adjusted $p$-value may make our analysis unnecessarily conservative and less informative, particularly for exploratory purposes. To address this, we introduce another type of adjusted $p$-value called the \emph{minimally adjusted $p$-value}. This approach applies only a minimal adjustment to the unadjusted $p$-value, yet the adjustment is sufficient to ensure that the adjusted $p$-value does not violate the paired control data. Importantly, it performs no worse than the unadjusted $p$-value, whereas a naive, unadjusted $p$-value could potentially be falsified by the paired control data.

To further motivate the idea of a minimally adjusted $p$-value, recall the examples in Section \ref{subsec: three examples}. For Participant $1$, among all $(p_{T_0, 1 \mid 0}, p_{T_1, 1 \mid 0})$ combinations compatible with the paired control data, the smallest $p$-value is $4 \times 10^{-4}$, which corresponds to an assay misclassification rates configuration that is not invalidated by the observed control data with high probability, while maintaining a similar significance level of vaccine responder evidence as the unadjusted $p$-value. As another example, for Participant $2$, given the lack of evidence supporting a batch effect ($\boldsymbol{\theta} = \boldsymbol{0}$ cannot be rejected at any reasonable level), reporting the unadjusted $p$-value of $3 \times 10^{-4}$ does not violate paired control data in this case. Motivated by these observations, the concept of the minimally adjusted $p$-value is to provide a $p$-value that cannot be falsified by the paired control data, while remaining as consistent as possible with the evidence indicated by the unadjusted $p$-value, \( p^\ast \).

\begin{definition}[Minimally adjusted $p$-value]\label{def: minimally adjusted} Consider the $\mathcal{A}(\alpha)$, $p_{\boldsymbol{\theta}}$, and $\mathcal{P}_{\mathcal{A}(\alpha)} = \{p_{\boldsymbol{\theta}}: \boldsymbol{\theta} \in \mathcal{A}(\alpha)\}$ as defined in Definition \ref{def: BB}. Assume $\mathcal{P}_{\mathcal{A}(\alpha)} \neq \emptyset$ and is a closed set. Let $p^\ast$ denote the unadjusted $p$-value for testing $H_{0}$ assuming $\boldsymbol{\theta} = \boldsymbol{0}$ (i.e., ignoring the assay misclassification rates). A minimally adjusted $p$-value $p^{\text{adj}}_{\text{min}}$ is defined as, among all the $p$-values in $\mathcal{P}_{\mathcal{A}(\alpha)}$ (i.e., all the $p$-values consistent with the paired control data), the one with the minimal distance to the unadjusted $p$-value $p^\ast $, i.e., 
   \[
    p^{\text{adj}}_{\text{min}} = \text{argmin}_{p_{\boldsymbol{\theta}} \in \mathcal{P}_{\mathcal{A}(\alpha)}} |p^\ast - p_{\boldsymbol{\theta}}|.
\]
Note that when $p^\ast \in \mathcal{P}_{\mathcal{A}(\alpha)}$, we have $ p^{\text{adj}}_{\text{min}} = p^\ast$. 
\end{definition}

We give three remarks. First, when the confidence set $\mathcal{A}(\alpha)$ does not contain $\boldsymbol{\theta} = \boldsymbol{0}$, the set of $p$-values, $\mathcal{P}_{\mathcal{A}(\alpha)}$, may still contain the unadjusted $p$-value, $p^\ast$. In this case, the minimally adjusted $p$-value still equals the unadjusted $p$-value, i.e., $p^{\text{adj}}_{\text{min}} = p^\ast$. Second, if $\mathcal{P}_{\mathcal{A}(\alpha)} = \emptyset$, the minimally adjusted $p$-value is undefined. This situation suggests that the assumption of shared assay misclassification rates between the primary sample and the paired control sample may have been violated. Third, unlike the maximally adjusted $p$-value, the minimally adjusted $p$-value does not enforce strict type-I error rate control. However, as shown in Theorem~\ref{thm: min adj better than unadj}, with high probability, $p^{\text{adj}}_{\text{min}}$ is closer to the true $p$-value (i.e., the $p$-value for testing $H_{0}$ under the oracle misclassification rates) compared to the unadjusted $p$-value $p^{*}$.

\begin{theorem}\label{thm: min adj better than unadj}
    If the set $\mathcal{P}_{\mathcal{A}(\alpha)}$ is pathwise-connected, then the minimally adjusted $p$-value, $ p^{\text{adj}}_{\text{min}}$, with probability at least $100\times(1-\alpha)\%$, is asymptotically no worse than the unadjusted $p$-value, $p^\ast$, in the sense that $\lim P(| p^{\text{adj}}_{\text{min}} - p^{\text{truth}}| \leq |p^\ast - p^{\text{truth}}|\mid H_{0})\geq 1-\alpha$ as $\min\{N_{0}^{\prime}, N_{1}^{\prime}\}\rightarrow \infty$, where $p^{\text{truth}}$ is the vaccine responder $p$-value under $H_{0}$ and the true values of misclassification rates $\boldsymbol{\theta}$.
\end{theorem}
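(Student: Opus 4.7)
My plan is to reduce the claim to the standard non-expansive property of metric projection onto a closed connected subset of $\mathbb{R}$, with the probability guarantee coming directly from the coverage of $\mathcal{A}(\alpha)$ given by Proposition~\ref{prop: Test for generic control}.

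First, I would let $\boldsymbol{\theta}^{\text{truth}}$ denote the (unknown) true vector of misclassification rates, so that $p^{\text{truth}} = p_{\boldsymbol{\theta}^{\text{truth}}}$. Define the event
\[
E_{N_0', N_1'} = \bigl\{\boldsymbol{\theta}^{\text{truth}} \in \mathcal{A}(\alpha)\bigr\}.
\]
By Proposition~\ref{prop: Test for generic control}, under the sampling model of the paired control data we have $\liminf_{\min(N_0', N_1') \to \infty} P(E_{N_0', N_1'}) \geq 1-\alpha$. On the event $E_{N_0', N_1'}$, by construction $p^{\text{truth}} \in \mathcal{P}_{\mathcal{A}(\alpha)}$.

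Next, I would exploit the assumption that $\mathcal{P}_{\mathcal{A}(\alpha)}$ is pathwise-connected: since $\mathcal{P}_{\mathcal{A}(\alpha)} \subseteq \mathbb{R}$, pathwise-connectedness forces it to be an interval. Combined with the closedness assumption in Definition~\ref{def: minimally adjusted}, $\mathcal{P}_{\mathcal{A}(\alpha)}$ is a closed interval $[a,b]$ for some $a \leq b$. The minimally adjusted $p$-value is then precisely the metric projection of $p^\ast$ onto $[a,b]$, namely $p^{\text{adj}}_{\text{min}} = \text{median}(a, p^\ast, b)$.

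The main inequality then follows from the standard fact that projection onto a closed interval in $\mathbb{R}$ is non-expansive: for any $q \in [a,b]$,
\[
\bigl|\,\mathrm{proj}_{[a,b]}(p^\ast) - q\,\bigr| \leq |p^\ast - q|.
\]
This is immediate from a two-line case analysis (if $p^\ast \in [a,b]$ equality holds; if $p^\ast < a$ then $\mathrm{proj}_{[a,b]}(p^\ast) = a$ and both sides equal $q-a$ versus $q-p^\ast \geq q-a$; symmetrically if $p^\ast > b$). Applying this with $q = p^{\text{truth}} \in [a,b]$, which is valid on $E_{N_0', N_1'}$, gives $|p^{\text{adj}}_{\text{min}} - p^{\text{truth}}| \leq |p^\ast - p^{\text{truth}}|$ on $E_{N_0', N_1'}$. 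Hence the probability of the desired event is at least $P(E_{N_0', N_1'})$, whose limit inferior is $\geq 1-\alpha$.

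The only delicate point is justifying that $\boldsymbol{\theta}^{\text{truth}} \in \mathcal{A}(\alpha)$ implies $p^{\text{truth}} \in \mathcal{P}_{\mathcal{A}(\alpha)}$ — this is a definitional tautology once one writes out $\mathcal{P}_{\mathcal{A}(\alpha)} = \{p_{\boldsymbol{\theta}}: \boldsymbol{\theta}\in \mathcal{A}(\alpha)\}$. I do not anticipate any genuine obstacle; the pathwise-connectedness and closedness hypotheses were engineered precisely so that the one-dimensional projection argument applies, and the probability $1-\alpha$ is inherited verbatim from Proposition~\ref{prop: Test for generic control}. The mildest subtlety is that the conclusion holds under $H_0$ as stated, which matches the sampling framework under which the limiting normality in Proposition~\ref{prop: study sample} (and hence the well-definedness of $p^{\text{truth}}$ as a valid $p$-value) is guaranteed.
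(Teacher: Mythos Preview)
Your proposal is correct and follows what is essentially the only natural route given the hypotheses: the pathwise-connectedness and closedness assumptions on $\mathcal{P}_{\mathcal{A}(\alpha)}$ are tailored precisely so that it is a closed interval in $[0,1]$, the minimally adjusted $p$-value is then the metric projection of $p^\ast$ onto that interval, and the conclusion reduces to the non-expansiveness of this projection applied at $q=p^{\text{truth}}$, which lies in the interval on the coverage event from Proposition~\ref{prop: Test for generic control}. This is the same argument the paper uses; the only cosmetic addition worth making explicit is that the event $E_{N_0',N_1'}$ depends only on the control data, so conditioning on $H_0$ (a constraint on the primary-sample parameters) does not alter its probability, which justifies passing from $P(E_{N_0',N_1'})$ to $P(E_{N_0',N_1'}\mid H_0)$.
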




For a generic control, without imposing any structure on the four assay misclassification rates parameters, the confidence set $\mathcal{A}(\alpha)$ could be uninformative in the sense that 1) a participant will almost always be declared a non-responder according to Definition \ref{def: BB}; and 2) the minimally adjusted $p$-value will almost always equal the unadjusted $p$-value. When implementing either the maximally or minimally adjusted $p$-value as introduced in Definition \ref{def: BB} or \ref{def: minimally adjusted}, we recommend adding the following constraint on the difference in the assay false negative rates at $T_0$ and $T_1$:  $|p_{T_0, 0 \mid 1} - p_{T_1, 0 \mid 1}| \leq \delta_0$ for some pre-specified, reasonable $\delta_0$. For instance, one may stipulate that the assay false negative rates be the same at $T_0$ and $T_1$, corresponding to letting $\delta_0 = 0$, and reduce the degree of freedom of nuisance parameters $\boldsymbol{\theta}$ from $4$ to $3$.

\section{Simulation studies}
\label{sec: simulation}

\subsection{Simulation setup}
\label{subsec: simulation setup}

We evaluate the performance of the proposed methods across a range of data-generating processes. The simulation is designed to mimic realistic scenarios where baseline and post-vaccination measurements of cell positivity are subject to assay misclassification. The following steps outline the data-generating process of the primary and paired control samples at baseline ($T_0$) and post-vaccination ($T_1$) timepoints:

\paragraph{Responder status:}
Each participant is randomly assigned a binary true responder status, $R \in \{0, 1\}$, where $R = 1$ indicates a vaccine responder, with $P(R = 1) = 0.5$.

\paragraph{Primary sample size:}
The primary sample size is fixed at  $N_0 = N_1 =  50{,}000 $ for both baseline $T_{0}$ and post-vaccination $T_{1}$. 

\paragraph{Control sample size:}
 We consider the following sample size for the control sample: $N'_0 = N'_1 = 1000$, $N'_0 = N'_1 = 10{,}000 $, $N'_0 = N'_1 =  50{,}000 $, and $N'_0 = N'_1 = 100{,}000$.

\paragraph{Proportion of positive cells in the primary sample:}
The proportion of positive cells in the primary sample at baseline and post-vaccination, denoted by $p_{T_0, 1}$ and $p_{T_1, 1}$, respectively, are generated as follows:
\begin{align*}
    p_{T_0, 1} \sim \text{Beta}(1, 500),~~
    p_{T_1, 1} = 
    \begin{cases}
        p_{T_0, 1}, & \text{if } R = 0, \\
        \min\{\gamma \cdot p_{T_0, 1}, ~1\}, & \text{if } R = 1,
    \end{cases}
\end{align*}
where  $\gamma \in \{2,4,10\}$ controls the effect size for responders. 

\paragraph{Proportion of positive cells in the control sample:}
The proportions of positive cells in the control samples are constant across two timepoints, with $p^C_{T_0, 1} = p^C_{T_1, 1}  = 0.001$ for $N'_0 = N'_1 = 100{,}000$, $p^C_{T_0, 1} = p^C_{T_1, 1}  = 0.005$ for $N'_0 = N'_1 = 10{,}000 $, and $p^C_{T_0, 1} = p^C_{T_1, 1}  = 0.03$ for $N'_0 = N'_1 = 1000$.

\paragraph{Misclassification Rates:}
We examine four scenarios for the misclassification rates. In all scenarios, the assay false negative rates are assumed to remain constant across timepoints and are simulated from the following Beta distribution:
\begin{align*}
    &p_{T_0, 0 \mid 1 } = p_{T_1, 0 \mid 1} \sim \text{Beta}(1, 5).
\end{align*}
\noindent In \textbf{Scenario I (no batch effect)}, assay false positive rates remain constant across timepoints and are simulated from the following Beta distribution:
\begin{align*}
    &p_{T_0, 1\mid 0 } = p_{T_1, 1\mid 0 } \sim \text{Beta}(1, 2000).
\end{align*}
\noindent In Scenarios II through IV, the assay false positive rates are distinct across two timepoints.  In \textbf{Scenario II (small batch effect)}, they are simulated as follows:
\begin{align*}
    &p_{T_0, 1\mid 0 } \sim \text{Beta}(1, 2000),~~p_{T_1, 1\mid 0 } \sim \text{Beta}(2, 2000).
\end{align*}

\noindent In \textbf{Scenario III (moderate batch effect)}, they are simulated as follows:
\begin{align*}
    &p_{T_0, 1\mid 0 } \sim \text{Beta}(3, 2000),~~p_{T_1, 1\mid 0 } \sim \text{Beta}(6, 2000).
\end{align*}

\noindent In \textbf{Scenario IV (large batch effect)}, they are simulated as follows:
\begin{align*}
    &p_{T_0, 1\mid 0 } \sim \text{Beta}(1, 2000),~~p_{T_1, 1\mid 0 } \sim \text{Beta}(5, 2000).
\end{align*}

\noindent When the participant is a genuine vaccine responder, the ratio of the vaccine-induced response magnitude to the difference in assay false positive rates approximately ranges from $1$ ($\gamma = 2$ and large batch effect) to $36$ ($\gamma = 10$ and small batch effect).


\paragraph{Observed Positive Counts:}
Positive cell counts are sampled from the following Binomial distributions:
\begin{align*}
    n_0 \sim \text{Binomial}(N_0, p^{\ast}_{T_0, 1}),~~~ 
    n_1 \sim \text{Binomial}(N_1, p^{\ast}_{T_1, 1}), \\
    n'_{0} \sim \text{Binomial}(N'_{0}, p^{C, \ast}_{T_0, 1}),~~~ 
    n'_{1} \sim \text{Binomial}(N'_{1}, p^{C, \ast}_{T_1, 1}),
\end{align*}
where $p^{ \ast}_{T_0, 1} = p_{T_0, 1}\cdot(1 - p_{T_0, 0\mid 1 }) + (1 - p_{T_0, 1})\cdot p_{T_0, 1\mid 0 }$, $p^{\ast}_{T_1, 1} = p_{T_1, 1}\cdot (1 - p_{T_1, 0\mid 1}) + (1 - p_{T_1, 1})\cdot p_{T_1, 1\mid 0}$, $p^{C, \ast}_{T_0, 1} = p^C_{T_0, 1}\cdot (1 - p_{T_0, 0\mid 1 }) + (1 - p^C_{T_0, 1})\cdot p_{T_0, 1\mid 0 }$, and $p^{C, \ast}_{T_1, 1} = p^C_{T_1, 1}\cdot (1 - p_{T_1, 0\mid 1}) + (1 - p^C_{T_1, 1})\cdot p_{T_1, 1\mid 0}$.

For each combination of simulation parameters, we ran the simulation $2,000$ times. Each simulated dataset consists of count data $(n_0, N_0, n_1, N_1, n'_0, N'_0, n'_1, N'_1)$. For each simulated dataset, we calculated three \(p\)-values: (1) the \emph{unadjusted \(p\)-value}, based on a one-sided test using the primary samples and ignoring the paired control data; (2) the \emph{maximally adjusted \(p\)-value}, outlined in Definition \ref{def: BB}; and (3) the \emph{minimally adjusted \(p\)-value}, outlined in Definition \ref{def: minimally adjusted}. We evaluate the type-I error rates and power associated with each method across different data-generating processes. We also report the power of a procedure based on the unobtainable, ground-truth misclassification rates (true power) as a benchmark.

\subsection{Simulations results}
\label{subsec: simu results}
Table~\ref{tab:sim_results} summarizes the average type-I error rates and power for each procedure under various choices of the control sample size $N'_0 = N'_1$ and effect sizes controlled by the parameter $\gamma$. Type-I error rate was computed as the percentage of times a non-responder was declared a vaccine responder, while power was computed as the percentage of times a vaccine responder was declared as such. We observe several consistent trends. First, the maximally adjusted $p$-value maintains the type-I error rates at the nominal level across all data-generating processes, highlighting its validity. Somewhat surprisingly, when the control sample size and the effect size are both reasonably large, the maximally adjusted $p$-value approach has competitive power in all scenarios. For instance, when $N'_0 = N'_1 = 100{,}000$ and $\gamma = 4$, both of which are commonly seen in real ICS datasets for vaccine responders, the maximally adjusted method achieves a power of $0.44$, almost comparable to that of $0.48$ based on the unobtainable true $p$-value, under a small batch effect scenario. It achieves a power of $0.46$ under a large batch effect scenario, again comparable to that of $0.50$ based on the unobtainable true $p$-value. As expected, the maximally adjusted $p$-value shows limited power when the control sample size is small (i.e., when there is limited information available from the control sample) or when the effect size is small.

On the other hand, the minimally adjusted \(p\)-value exhibits favorable performance when there is no batch effect; however, in the presence of a batch effect, its type-I error is inflated as it may underestimate the difference in the misclassification rates between $T_{0}$ and $T_{1}$ (i.e., the true batch effect), leading to falsely declaring a non-respondent a responder. Compared to the unadjusted $p$-value, which has consistently highly inflated type-I error rates across the simulation settings, the minimally adjusted $p$-value starts to exhibit a more favorable type-I error rate when the control sample sizes are large, e.g., when $N'_0 = N'_1 = 100{,}000$, highlighting its ability to partially correct for the type-I error rate inflation when the control samples are large. When the control sample sizes are small, e.g., $N'_0 = N'_1 = 1000$, or when the effect size is small, the maximally adjusted $p$-value may be overly conservative and have low power, while the minimally adjusted $p$-value approach could help detect some useful signals about vaccine responders. 

Overall, we recommend using the maximally adjusted vaccine responder $p$-value for important primary responses when comparing two vaccine regimens. In these scenarios, a valid $p$-value is critical as it serves as a basis for comparing two vaccine regimens. On the other hand, when the control sample size is small, the effect size is expected to be small, or the objective is mostly exploratory (e.g., early-phase development for many HIV vaccines), then using the minimally adjusted $p$-value may be a reasonable alternative. In any case, to provide a more comprehensive and balanced analysis, researchers could report both the maximally and minimally adjusted $p$-values; for instance, one could report one as the primary analysis and the other as a secondary analysis, depending on the nature and requirements of the specific research question.

\begin{table}[H]
\centering
\caption{Type-I error rate and power of each of the three procedures (unadjusted, maximally adjusted, minimally adjusted) under various control sample sizes ($N'_0$) and effect sizes controlled by the parameter $\gamma$. True power is the power based on a test using the unobtainable, ground-truth misclassification rates.  All numbers are multiplied by $100$ (i.e., the unit is a percentage).  }\label{tab:sim_results}
\resizebox{0.95\textwidth}{!}{\begin{tabular}{cccccccccccccc}
 \toprule
  &&\multicolumn{2}{c}{Unadjusted} &&\multicolumn{2}{c}{Maximally adjusted} & &\multicolumn{2}{c}{Minimally Adjusted}  & &True \\
\cline{3-4} \cline{6-7} \cline{9-10} \cline{12-12} 
 $N'_0$ & $\gamma$ & Type-I Err & Power && Type-I Err & Power &  & Type-I Err & Power & &Power\\ 
 \midrule

\multicolumn{12}{c}{\textbf{Scenario I: No batch effect}} \\
   1000 & 2 &3.3 & 41.0& & 0.9 & 2.0 & &3.4 & 40.9 & & 44.4\\ 
  1000 & 4 &2.8 & 46.9& & 1.0 & 6.5 & &3.0 & 46.8  && 48.1\\ 
  1000 & 10 &2.5 & 50.4& & 0.4 & 21.9 & &2.6 & 50.2 && 50.8\\ 
10{,}000 & 2 &2.9 & 42.7& & 0.1 & 12.8 & &2.9 & 42.7 & &46.1\\ 
  10{,}000 & 4 &2.5 & 48.5& & 0.2 & 31.2 & &2.5 & 48.5 &&49.6\\ 
  10{,}000 & 10 &2.1 & 50.7& & 0.2 & 45.3 & &2.3 & 50.7 &&50.8\\ 
  100{,}000 & 2 &2.1 & 40.3& & 0.2 & 32.7 & &2.1 & 40.3 &&44.1\\ 
  100{,}000 & 4 &2.4 & 48.6& & 0.2 & 45.4 & &2.5 & 48.6 && 49.7\\ 
  100{,}000 & 10 & 2.2 & 50.3& & 0.3 & 49.2 & &2.2 & 50.3 &&50.7\\
 \toprule
 \multicolumn{12}{c}{\textbf{Scenario II: Small Batch effect}} \\
 1000 & 2 & 24.3 & 43.9 && 2.1 & 2.2 && 24.2 & 43.9 &&45.0\\ 
  1000 & 4 & 25.8 & 46.1 && 2.4 & 6.8 && 25.8 & 45.9 && 47.3\\ 
  1000 & 10 & 25.0 & 49.2 && 2.0 & 24.0 && 24.9 & 49.2 && 49.7\\ 
  10{,}000 & 2 & 24.4 & 45.1 && 0.5 & 12.4 && 23.9 & 45.2 &&46.0\\ 
  10{,}000 & 4 & 23.6 & 47.8 & &0.2 & 32.5 & &22.9 & 47.9 &&49.5\\ 
  10{,}000 & 10 & 24.2 & 49.6 && 0.4 & 42.3 && 23.2 & 49.6 &&50.2\\ 
100{,}000 & 2 & 24.7 & 44.2 && 0.5 & 33.3 & &16.7 & 44.6 &&45.1\\ 
  100{,}000 & 4 & 25.0 & 46.9 && 0.6 & 44.0 && 17.0 & 47.3 &&48.4\\ 
  100{,}000 & 10 & 24.7 & 49.8 && 0.3 & 48.9 && 16.1 & 50.1 &&50.4\\ 
\toprule
\multicolumn{12}{c}{\textbf{Scenario III: Moderate batch effect}} \\
   1000 & 2 & 37.2 & 43.4 && 0.6 & 0.7 && 37.0 & 43.3 &&42.3\\ 
  1000 & 4 & 34.0 & 51.1 & &0.4 & 4.7 & &33.9 & 51.0 &&51.4\\ 
  1000 & 10 & 35.5 & 50.8 & &0.3 & 21.2 & &35.3 & 50.7 &&50.7\\
   10{,}000 & 2 & 36.9 & 44.2 & &0.3 & 11.0 & &35.3 & 44.0 &&42.8\\ 
  10{,}000 & 4 & 35.6 & 50.5 & &0.6 & 31.2 & &34.5 & 50.6 &&51.1\\ 
  10{,}000 & 10 & 34.0 & 50.8 & &0.3 & 43.5 & &32.7 & 50.8 &&51.0\\ 
  100{,}000 & 2 & 35.7 & 45.7 & &1.0 & 30.4 & &20.1 & 45.0 &&44.2\\ 
  100{,}000 & 4 & 36.2 & 48.3 & &0.8 & 43.7 & &22.9 & 48.5 &&48.6\\ 
  100{,}000 & 10& 34.8 & 51.8 & &1.0 & 49.9 && 20.1 & 51.9 &&51.9\\ 
  \toprule
   \multicolumn{12}{c}{\textbf{Scenario IV: Large batch effect}} \\
1000 & 2 & 45.0 & 49.4 & &1.5 & 2.4 && 44.9 & 49.0 &&45.6\\ 
  1000 & 4 & 44.8 & 49.7 & &1.6 & 6.8 && 44.5 & 49.5 &&48.7\\ 
  1000 & 10 & 43.2 & 50.5 & &2.3 & 21.8 && 42.9 & 50.5 &&50.5\\ 
  10{,}000 & 2 & 43.9 & 49.0 & &0.5 & 13.1 & &42.6 & 48.7 &&44.9\\ 
  10{,}000 & 4 & 44.4 & 50.1 & &0.6 & 34.0 & &43.2 & 50.0 &&49.6\\ 
  10{,}000 & 10 & 43.2 & 51.2 & &0.8 & 43.4 & &41.4 & 51.1 &&50.8\\ 
100{,}000 & 2 & 46.1 & 47.3 & &1.1 & 33.8 & &25.6 & 45.8 &&43.3\\ 
  100{,}000 & 4 & 44.3 & 50.4 & &1.1 & 46.0 & &25.5 & 50.1 &&49.9\\ 
  100{,}000 & 10 & 43.7 & 51.4 & &1.0 & 49.8 & &24.4 & 51.3 &&51.1\\ 
  \bottomrule
    \end{tabular}}
\end{table}

\section{Vaccine responders in the CoVPN 3008 Study}
\label{sec: case study}

\subsection{Description of ICS data and study goals}
\label{subsec: case study paired control}
In the CoVPN 3008 Study, study participants received $1$ or $2$ primary series mRNA-1273 vaccines depending on the baseline SARS-CoV-2 naive status. Specimens from participants in the random immunogenicity subset were collected at baseline (prior to vaccination) and at $4$ weeks post the last vaccination. Flow cytometry was used to examine SARS-CoV-2-specific CD4+ and CD8+ T cell responses using a validated ICS assay that has been previously described \citep{horton2007optimization,de2012omip}. All PBMC samples were assayed against a variety of peptide pools representing the ancestral strain of SARS-CoV-2 and the Omicron BA.4/5 variant. Our analysis will focus on the following: 1) Spike 1 subunit with Omicron BA.4 \& BA.5 variant peptides (BA.4/5 S1), 2) Spike 2 subunit with Omicron BA.4 \& BA.5 variant peptides (BA.4/5 S2), 3) Nucleocapsid with Omicron BA.4 \& BA.5 variant peptides (BA.4/5 N), and 4) Envelope and Membrane with BA.1, BA.2, BA.2.12.1, BA.4 \& BA.5 variant peptides (BA.1/2/4/5 E/M). As a control, cells were incubated with DMSO, the diluent for the peptide pools.

Because the vaccine insert encoded the Spike protein of SARS-CoV-2 but not the Envelope (E), Membrane (M), or Nucleocapsid (N) proteins of SARS-CoV-2 or any of its variant, the vaccine may induce response against BA.4/5 S1 and BA.4/5 S2 but not against BA.4/5 N or BA.1/2/4/5 E/M. If we detect a significant increase in T cell response against BA.4/5 N or BA.1/2/4/5 E/M from baseline ($T_0$) to $4$ weeks post last vaccination ($T_1$), then this could be used as evidence supporting acquisition of an SARS-CoV-2 infection from $T_0$ to $T_1$. Our first goal in this case study is to analyze the T cell response against BA.4/5 N and BA.1/2/4/5 E/M and determine which participants had an incident asymptomatic infection between two specimen collection. Our second goal is to assess vaccine-induced response against BA.4/5 S1 and BA.4/5 S2 among study participants who did not acquire a SARS-CoV-2 infection between $T_0$ and $T_1$.

\subsection{Detecting Omicron infections between two specimen collection timepoints}
\label{subsec: BA.4/5 N}
We first consider $N = 194$ participants with a baseline and post-vaccination measurement of the primary marker, percentage of CD4+ T cells expressing IFN-$\gamma$ and/or IL-2, against BA.4/5 N. One standard preprocessing step in ICS analysis is to filter out measurements that have too few CD4+ T cells. We used a threshold of $10{,}000$ and ended up with $192$ ``per-protocol" participants. The left panel of Figure \ref{fig: case study Omicron infection} plots the unadjusted vaccine responder $p$-values, minimally adjusted vaccine responder $p$-values, and maximally adjusted vaccine responder $p$-values (in the $\text{log}_{10}$-scale) of these $192$ per-protocol participants. Both the maximally and minimally adjusted $p$-values assumed equal false negative rates at $T_0$ and $T_1$ but otherwise did not restrict the assay misclassification rates and let the paired control samples inform them. To aid visualization, we capped all $p$-values less than $1\times 10^{-10}$ at $1\times 10^{-10}$. 

To account for multiple hypothesis testing of $N = 192$ study participants, we used the Benjamini-Hochberg procedure \citep{benjamini1995controlling} to control for the false discovery rate (FDR) at $5\%$. According to the unadjusted method, $18/192 = 9.4\%$ of participants exhibited T cell response to the BA.4/5 N protein and might experience an incident Omicron infection between $T_0$ and $T_1$, while $13/192 = 6.8\%$ and only $5/192 = 2.6\%$ of participants showed T cell response to the BA.4/5 N protein based the minimally adjusted $p$-values and the maximally adjusted $p$-values, respectively. Panel 1 in Table \ref{tab: case study BB BA45 N} summarizes the count data from these $5$ participants (indexed as ID 1-5) who showed T cell response to the BA.4/5 N protein under the maximally adjusted $p$-values. Interestingly, for ID 2 and 3, the maximally adjusted $p$-values showed much stronger evidence of increased T cell response compared to the unadjusted $p$-values, because the paired control samples suggested decreased assay false positivity rate at $T_1$ compared to that at $T_0$. In the downstream analysis, one usual analysis strategy is to define a background-subtracted continuous endpoint at each timepoint and then take the difference in this background-subtracted endpoint. For all these $5$ participants, their differences ($T_1$ minus $T_0$) in the background-subtracted percentage of positive cells are positive, consistent with them being classified as having an Omicron infection. Figure \ref{fig: case study p-val vs magnitude} reinforces this point by plotting both the unadjusted and maximally adjusted p-values against the background-subtracted response magnitude for all $N = 192$ participants. Unlike the unadjusted p-values, the maximally adjusted p-values closely track the background-subtracted response magnitude, with the p-value consistently decreasing as the magnitude of the background-subtracted response increases.

We also examined which participants yielded different conclusions when comparing the minimally adjusted $p$-values with the unadjusted $p$-values. Panel 2 in Table \ref{tab: case study BB BA45 N} summarizes the count data for 7 participants, where the conclusions from the two methods diverged, both controlling for the FDR at 5\%. Interestingly, in nearly every case, there is a substantial difference between the unadjusted and minimally adjusted $p$-values, and the minimally adjusted $p$-values align closely with the background-subtracted response magnitude. For example, for participants ID 1, 2, 3, 4, and 6, the minimally adjusted $p$-values indicate no T cell response to Omicron BA.4/5 N protein; in each case, the background-subtracted response magnitude is either negative (e.g., $-0.05\%$) or very close to zero (e.g., $0.002\%$). However, for ID 7, the minimally adjusted $p$-value suggests an elevated T cell response, and the background-adjusted response magnitude is indeed positive.

We also repeated the analysis for the primary marker against BA.1/2/4/5 E/M; see the right panel of Figure \ref{fig: case study Omicron infection}. The number of participants who showed evidence of increased T cell response against the Omicron Envelope or Membrane proteins dropped from $9$, according to the unadjusted or the minimally adjusted $p$-values, to $3$, according to the maximally adjusted $p$-values, both controlling for the FDR at $5\%$. 

\begin{figure}[ht]
    \centering
   \begin{minipage}{.49\textwidth}
        \centering
\includegraphics[width=0.98\linewidth]{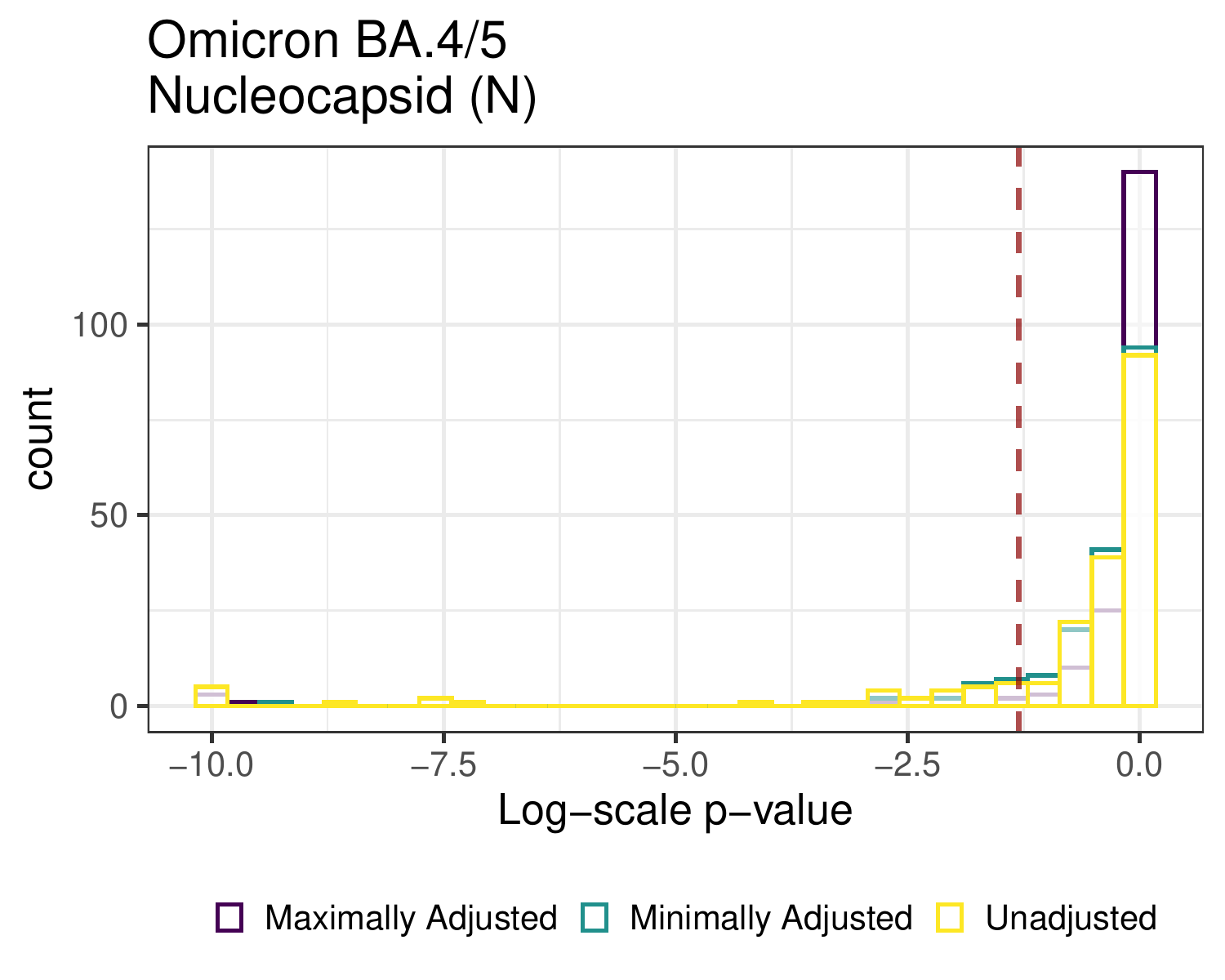}
    \end{minipage}%
    \begin{minipage}{0.49\textwidth}
        \centering
\includegraphics[width=0.98\linewidth]{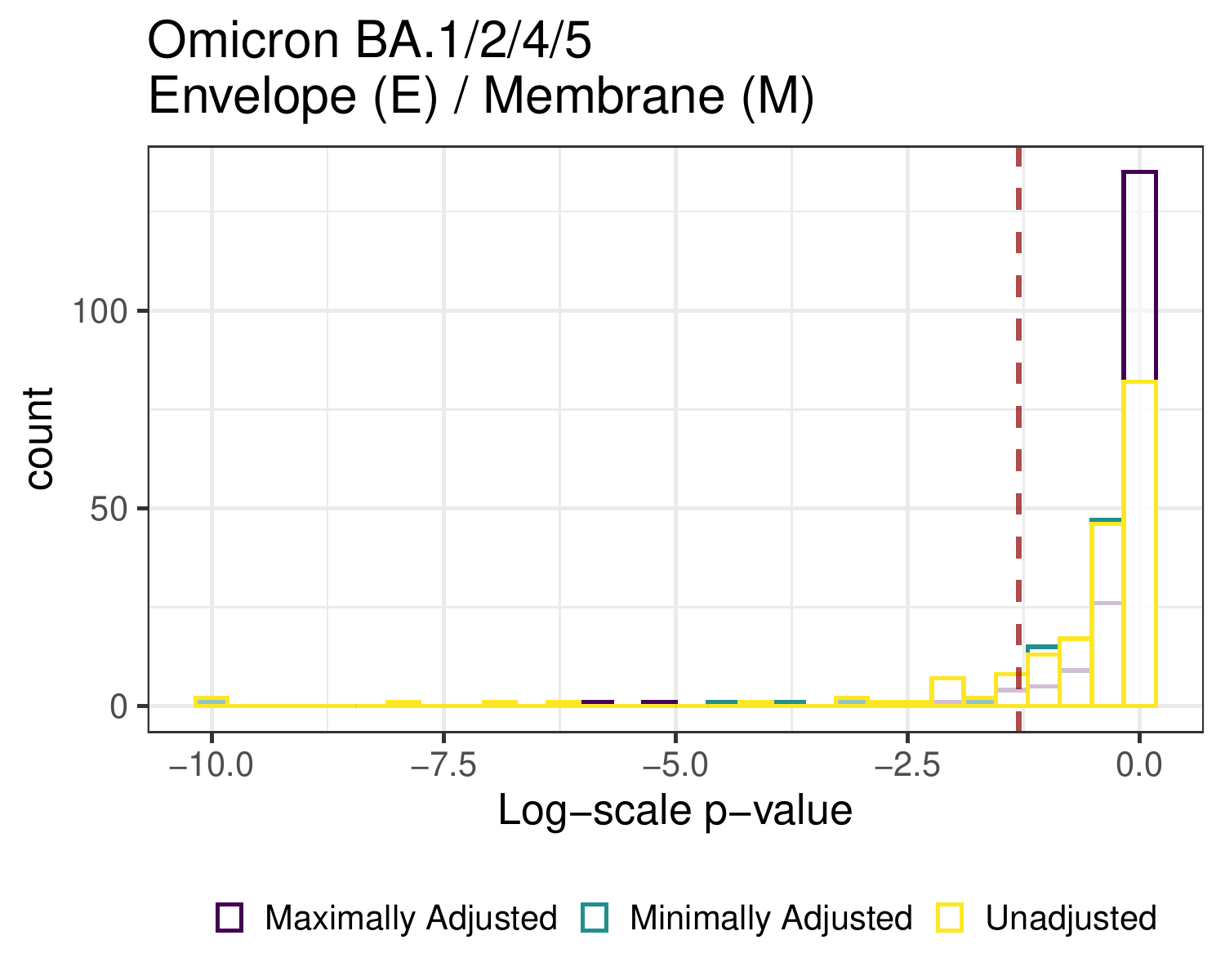}
    \end{minipage}
    \caption{Distribution of the unadjusted (gold), minimally adjusted (turquoise), and maximally adjusted (purple) $p$-values for the percentage of CD4+ T cells expressing IFN-$\gamma$ and/or IL-2 against Omicron BA.4/5 N (left panel) and against Omicron BA.1/2/4/5 E/M (right panel). The red vertical dashed line represents $p = 0.05$.}
    \label{fig: case study Omicron infection}
\end{figure}

\begin{table}[H]
    \centering
    \resizebox{0.9\textwidth}{!}{
    \begin{tabular}{ccccccccc} \hline
     && & && & Maximally  &  &Background\\ 
     \textbf{Panel 1} &\multicolumn{2}{c}{Baseline} & &\multicolumn{2}{c}{Post-vaccination}& Adjusted &  Unadjusted  &Subtracted\\
      & & & & && $p$-value &  $p$-value      &Magnitude\\
     \cline{2-3} \cline{5-6}
      \textbf{ID 1}  & Positive & Negative && Positive & Negative\\
        Primary sample & 4  &51,006 - 4 &&163 &105,179 - 163 & $<1\times 10^{-10}$ & $<1\times 10^{-10}$ & 0.12\%\\
       Control sample & 13 &102,745 - 13 &&84 &213,187 - 84 \\ \hline
       
       \textbf{ID 2}  & Positive & Negative && Positive & Negative\\
        Primary sample &  71 &113,274  &&22 &120,776 & $<1\times 10^{-10}$ & 1 & 0.04\%\\
       Control sample &  148 &116,402 - 148  &&70 &203,093 - 70 \\ \hline
       
       \textbf{ID 3}  & Positive & Negative && Positive & Negative\\
        Primary sample & 54 &133,002 - 54  &&74 &107,231 - 74  & $7\times 10^{-10}$ & 0.001 & 0.06\%\\
       Control sample &  132 &262,821 - 132  &&42 &215,251 - 42 \\ \hline

       \textbf{ID 4}  & Positive & Negative && Positive & Negative\\
        Primary sample &  17  &27,563 - 17  &&80  &27,968 & $<1\times 10^{-10}$ & $1\times 10^{-10}$ & 0.25\%\\
       Control sample &  34  &79,041 - 34   &&8  &55,784 - 8 \\ \hline

       \textbf{ID 5}  & Positive & Negative && Positive & Negative\\
        Primary sample &  12 &118,019 - 12 &&102 &135,183 - 102 & $2\times 10^{-10}$ & $<1\times 10^{-10}$ & 0.05\%\\
       Control sample &  55 &241,697 - 55  &&92 &275,685 - 92 \\ \hline

 && & && & Minimally  &  &Background\\ 
     \textbf{Panel 2} &\multicolumn{2}{c}{Baseline} & &\multicolumn{2}{c}{Post-vaccination}& Adjusted &  Unadjusted  &Subtracted\\
      & & & & && $p$-value &  $p$-value      &Magnitude\\
     \cline{2-3} \cline{5-6}

      \textbf{ID 1}  & Positive & Negative && Positive & Negative\\
        Primary sample & 109 & 141,174 - 109 && 515 & 160,080 - 515 & 1 &  $<1\times 10^{-10}$ & -0.05\%\\
       Control sample & 191 & 266,842 - 191 && 1,175 & 321,473 - 1,175 \\ \hline
       \textbf{ID 2}  & Positive & Negative && Positive & Negative\\
        Primary sample &  52 & 73,398 - 52 && 119 & 70,127 - 119 &0.03 & $3\times 10^{-8}$ & $0.003\%$\\
       Control sample &  10 & 75,920 - 10 && 149 & 136,953 - 149 \\ \hline
       \textbf{ID 3}  & Positive & Negative && Positive & Negative\\
        Primary sample & 111 & 80,524 - 111 && 212 & 81,626 - 212  &0.15 & $2\times 10^{-8}$ & $0.002\%$\\
       Control sample &  35 & 183,124 - 35 && 239 & 171,449 - 239 \\ \hline

       \textbf{ID 4}  & Positive & Negative && Positive & Negative\\
        Primary sample &  69 & 142,560 - 69 && 116 & 102,021 - 116 & 0.08 & $3\times 10^{-9}$ & $0.003\%$\\
       Control sample &  29 & 314,409 - 29 && 141 & 196,983 - 141 \\ \hline

       \textbf{ID 5}  & Positive & Negative && Positive & Negative\\
        Primary sample &  21 & 117,374 - 21 &&  31 & 84,036 - 31 & 0.004 & 0.004 & 0.02\%\\
       Control sample &  22 & 221,106 - 22 &&  13 & 171,825 - 13 \\ \hline

        \textbf{ID 6}  & Positive & Negative && Positive & Negative\\
        Primary sample &  50 & 103,686 - 50 &&  96 & 120,586 - 96 &1 & 0.002 & -0.08\%\\
       Control sample& 133 & 208,117 - 133 && 421 & 238,717 - 421 \\ \hline

       \textbf{ID 7}  & Positive & Negative && Positive & Negative\\
        Primary sample & 71 &113,274 - 71 &&22 &120,776 - 22 & $<1\times 10^{-10}$ & 1 & 0.05\%\\
       Control sample & 148 & 116,402 - 148  &&70  &203,093 - 70 \\ \hline

    \end{tabular}}
    \caption{\textbf{Panel 1}: Count data from $5$ participants who were considered exhibiting an increased level of T cell response to the Omicron BA.4/5 N protein from $T_0$ to $T_1$, according to the maximally adjusted $p$-values controlling the FDR at $5\%$. \textbf{Panel 2}: Count data from $7$ participants. ID 1 to ID 6 were considered to exhibit an increased level of T cell response to the Omicron BA.4/5 N protein from $T_0$ to $T_1$, according to the unadjusted but not the minimally adjusted $p$-values, controlling the FDR at $5\%$. ID 7 was considered to exhibit an increased level of T cell response to the Omicron BA.4/5 N protein from $T_0$ to $T_1$, according to the minimally adjusted but not the unadjusted $p$-values.}
    \label{tab: case study BB BA45 N}
\end{table}

\begin{figure}[ht]
    \centering
   \begin{minipage}{.49\textwidth}
        \centering
\includegraphics[width=0.98\linewidth]{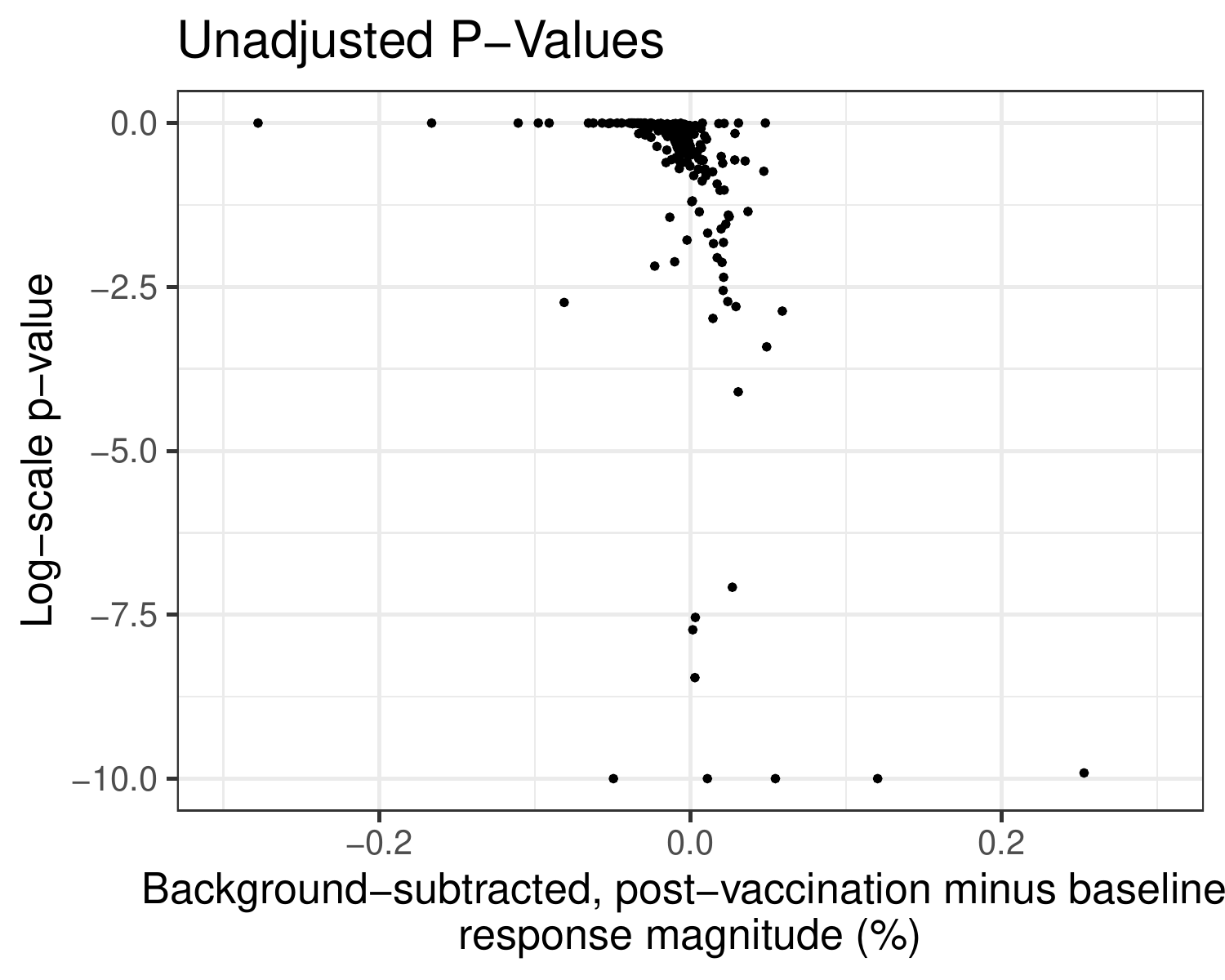}
    \end{minipage}%
    \begin{minipage}{0.49\textwidth}
        \centering
\includegraphics[width=0.98\linewidth]{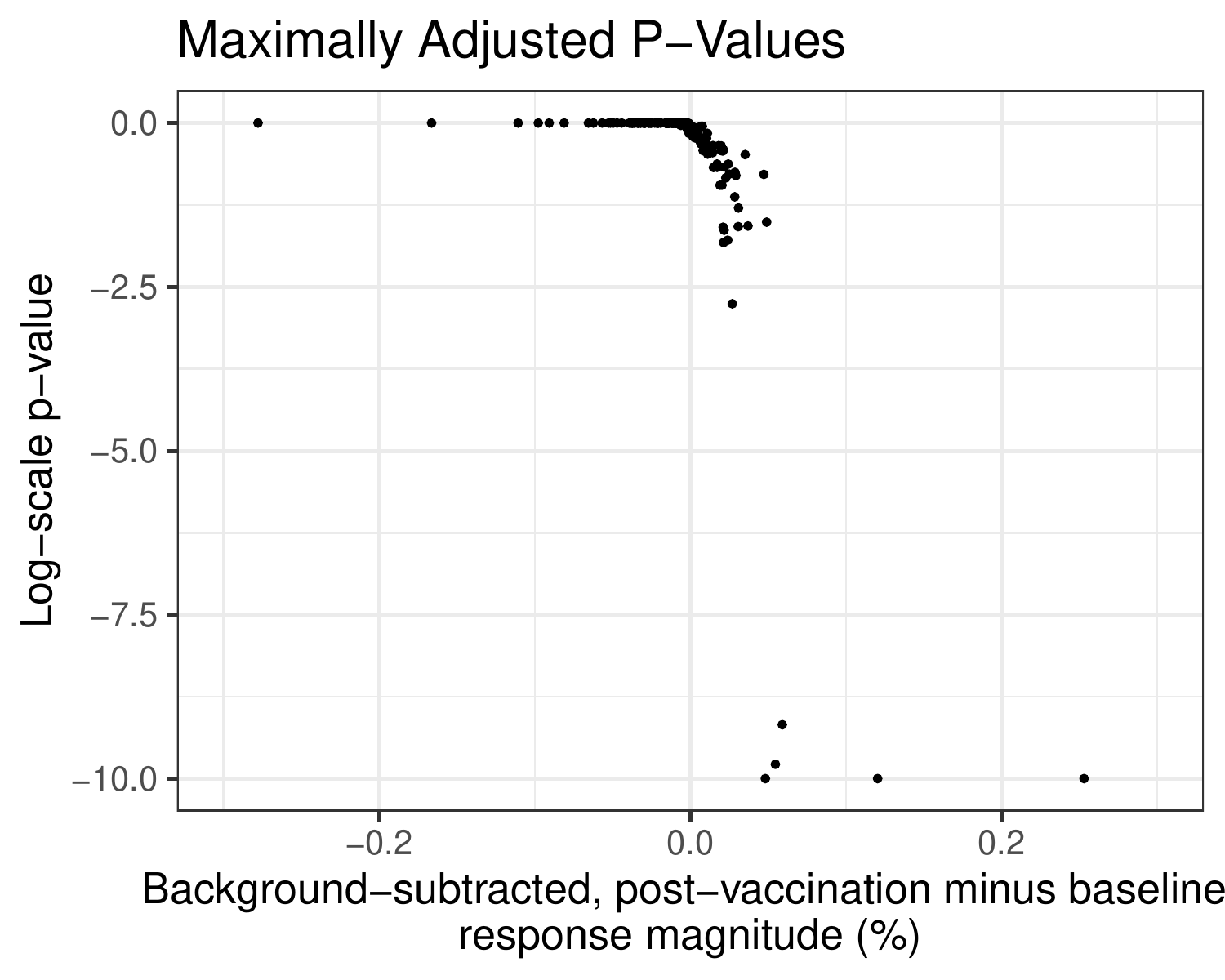}
    \end{minipage}
    \caption{Log-scale unadjusted p-values (left panel) and maximally adjusted p-values (right panel) for the percentage of CD4+ T cells expressing IFN-$\gamma$ and/or IL-2 against Omicron BA.4/5 N versus the background-subtracted response magnitude.}
    \label{fig: case study p-val vs magnitude}
\end{figure}



\subsection{Vaccine induced response against Omicron BA.4/5 Spike protein}
\label{subsec: case study BA.4/5 S}
We analyzed the proportion of CD4+ T cells expressing IFN-$\gamma$ and/or IL-2 in response to the Omicron BA.4/5 Spike protein subunits S1 and S2. We excluded 17 participants who showed evidence of Omicron infection (T cell response against Omicron BA.4/5 N and/or BA.1/2/4/5 E/M) between $T_0$ and $T_1$ based on the minimally adjusted $p$-values in Section \ref{subsec: BA.4/5 N}. We chose to use the minimally adjusted $p$-value, rather than the maximally adjusted $p$-value, as a less conservative criterion for defining an incident Omicron infection. This approach was intended to be more conservative in defining a cohort free from evidence of Omicron infection, ensuring that any observed response to the Omicron BA.4/5 Spike protein is likely to be vaccine-induced. 

We ended up with $175$ per-protocol participants. Using the maximally adjusted $p$-values and controlling for the FDR at $5\%$, $155/175 = 88.6\%$ and $155/175 = 88.6\%$ of participants were determined to be a vaccine responder who exhibited vaccine-induced immunogenicity against BA.4/5 S1 and BA.4/5 S2, respectively. When we controlled the FDR at $1\%$, then $150/175 = 85.7\%$ and $145/175 = 82.9\%$ of participants were determined a vaccine responder to subunit S1 and S2, respectively. Figure \ref{fig: case study p-val vs magnitude S1} plots the unadjusted and maximally adjusted $p$-values (Omicron BA.4/5 Spike subunit S1) versus the background-subtracted response magnitude for $N = 175$ participants. Again, the maximally adjusted $p$-values appear to better track the response magnitude compared to the unadjusted analysis.

\begin{figure}[ht]
    \centering
   \begin{minipage}{.49\textwidth}
        \centering
\includegraphics[width=0.98\linewidth]{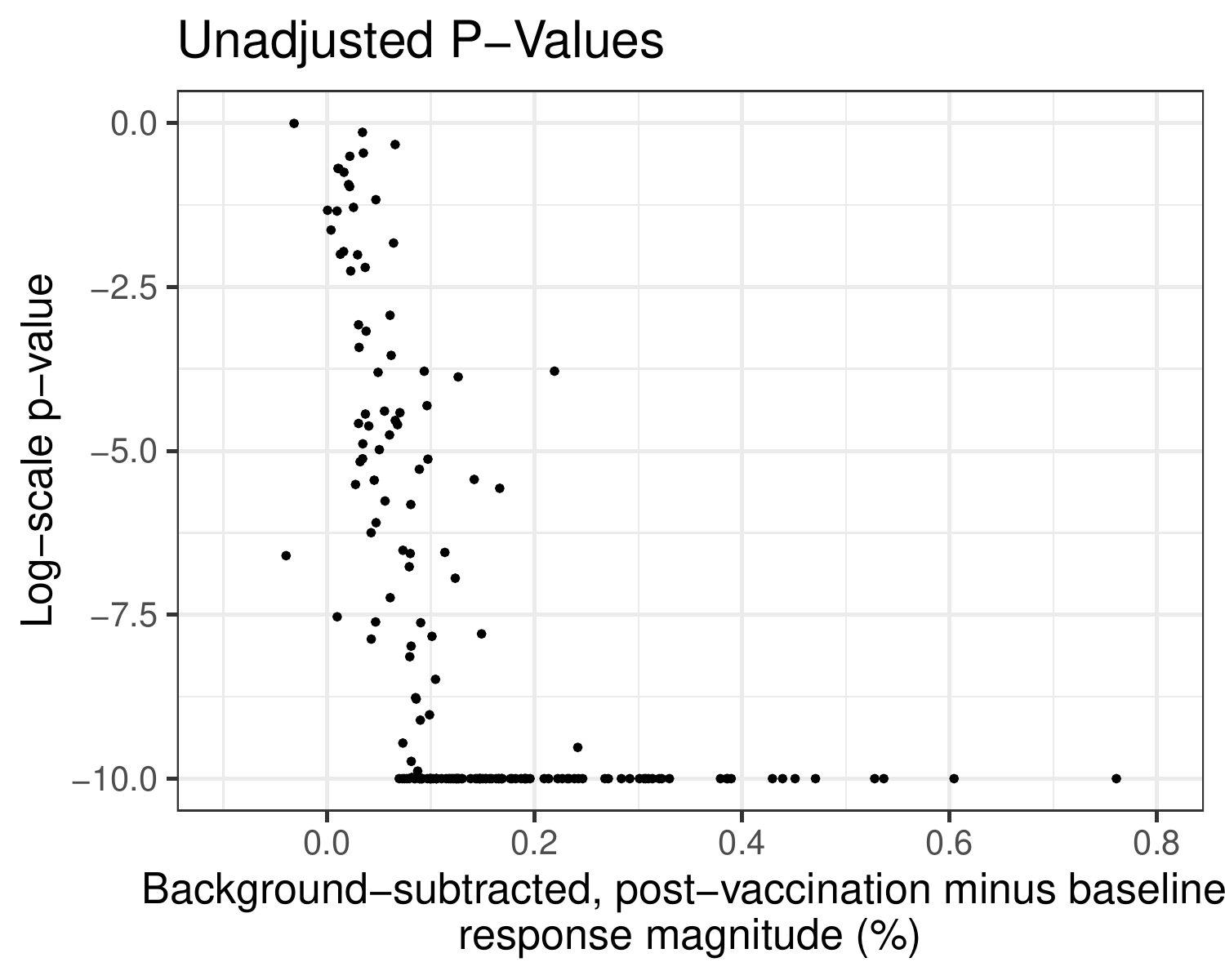}
    \end{minipage}%
    \begin{minipage}{0.49\textwidth}
        \centering
\includegraphics[width=0.98\linewidth]{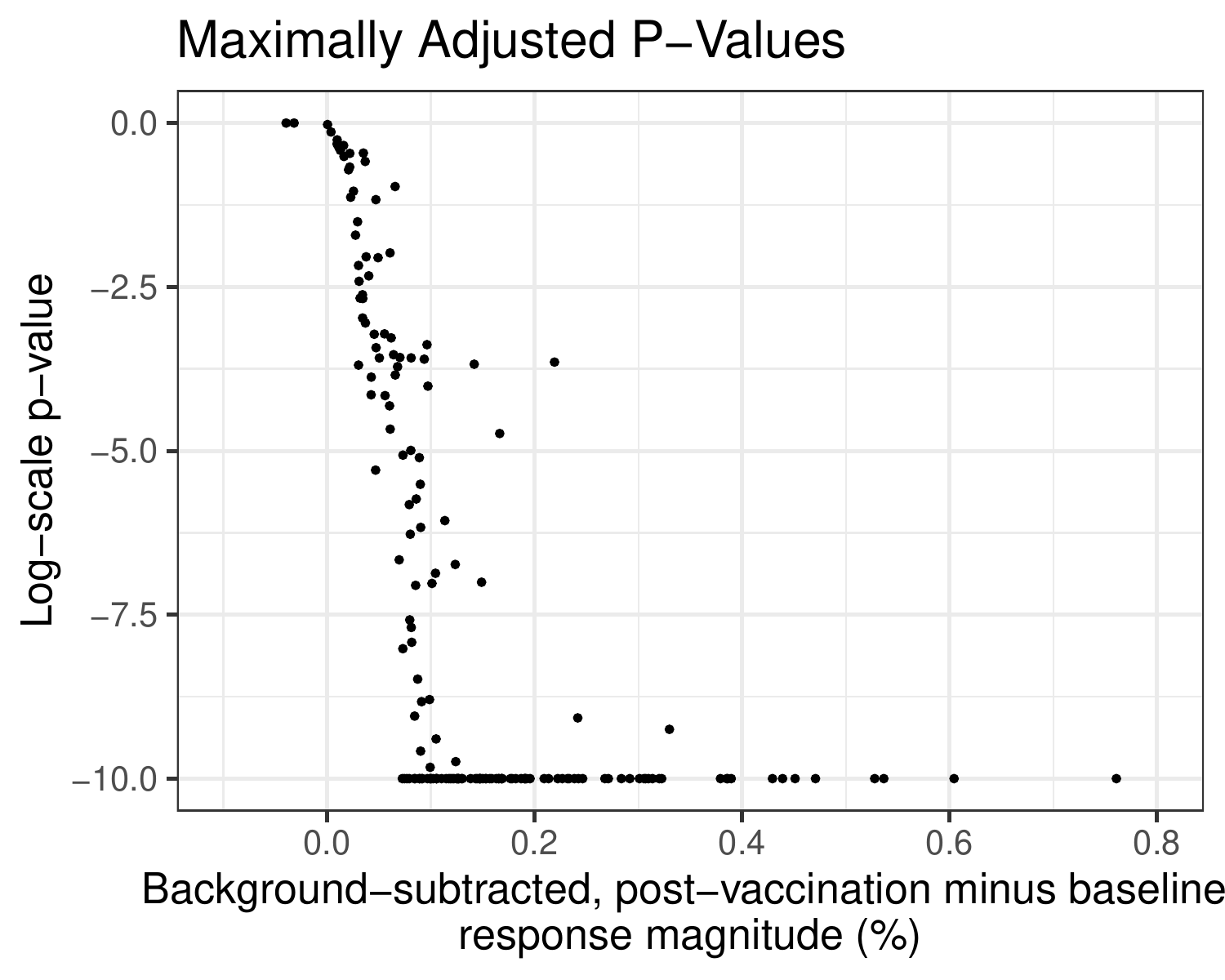}
    \end{minipage}
    \caption{Log-scale unadjusted p-values (left panel) and maximally adjusted p-values (right panel) for the percentage of CD4+ T cells expressing IFN-$\gamma$ and/or IL-2 against Omicron BA.4/5 Spike subunit S1 versus the background-subtracted response magnitude.}
    \label{fig: case study p-val vs magnitude S1}
\end{figure}

\section{Discussion}
\label{sec: discussion}
In this work, we proposed a novel statistical framework to incorporate the paired control readouts into determining a vaccine responder using single cell-level assay data measured at baseline and post-vaccination. The key assumption in our methods is that the primary samples and their paired control samples measured in the same assay run share the same assay misclassification rates. We leverage the paired control samples to infer the assay misclassification rates and then incorporate the inferred assay misclassification rates into the vaccine responder $p$-values. We studied two strategies: the maximally adjusted $p$-value reports the worst-case vaccine responder $p$-value across all plausible assay misclassification rates, as informed by the control samples, while the minimally adjusted $p$-value provides a $p$-value that is not contradicted by the paired control samples and aligns most closely with the primary analysis $p$-value. We explored the trade-off between type-I and type-II errors of these two $p$-value adjustment methods. Based on the simulation studies and the case study, the maximally adjusted $p$-value, in addition to being always valid, exhibits statistical power comparable to the unobtainable oracle $p$-value (assuming known assay misclassification rates) in many relevant data-generating processes. The minimally adjusted $p$-value has improved validity compared to the unadjusted $p$-value and could be a reasonable alternative when the maximally adjusted $p$-value lacks power. 

Overall, for a stringent comparison of immunogenicity of primary markers, such as the immune response against vaccine-matched antigens, we recommend reporting the maximally adjusted $p$-values. If researchers are interested in exploring additional discoveries, such as asymptomatic Omicron infection as discussed in our case study (Section \ref{subsec: BA.4/5 N}), then findings from the minimally adjusted $p$-values could be of interest and serve as candidate hypotheses for further investigation.

Our proposed framework can be extended in several ways. First, in the current work, we model the number of cells as being drawn from a binomial distribution, which does not account for potential correlations among cells within the same assay run. These correlations could lead to a smaller effective sample size. One approach to better address this issue is to incorporate technical replicates --- primary or negative control samples taken within the same assay run. Since these replicates are processed under the same conditions, we would expect to detect no statistical difference between them. Any observed, statistically significant differences could then be attributed to the correlations among cells, providing insights into the resulting sample size inflation. Second, rather than assuming that assay misclassification rates are uniform within a single assay run, one could model the relationship between these rates and the number of cells in each run. Third, while all participants in the setting discussed in this article received vaccination, many other studies involve randomizing participants to either an active vaccination or a placebo group (\citealp{hudgens2006causal}, \citealp{gabriel2020unified}, \citealp{gilbert2023controlled}, \citealp{chen2024role}). The randomized design, along with the inclusion of a placebo arm, could offer valuable additional information to further strengthen the accuracy of vaccine responder classifications. Finally, beyond the primary vaccine-responder analysis, the paired control sample could be utilized in sensitivity analyses for assay misclassification. This could be done by integrating the methodology proposed in our work with established sensitivity analysis frameworks for outcome misclassification (e.g., \citealp{shepherd2008does}, \citealp{imai2010causal}, \citealp{gilbert2016misclassification}, \citealp{heng2025sensitivity}).

\section*{Acknowledgment}
The authors thank the participants and study staff of the CoVPN 3008 trial study for their dedication and contributions. 
The authors thank Dr. Peter Gilbert, Dr. Andrew Fiore-Gartland, Dr. Erica Andersen-Nissen and Dr. Sharon Khuzwayo for their helpful comments. 

\section*{Supplementary Materials}

The online supplemental materials include detailed proofs of all theoretical results in this work and additional tables and discussions. The \texttt{R} code for implementing the methods and replicating the simulation studies is available on the GitHub page (\url{https://github.com/Zhe-Chen-1999/Debias-Positivity-Call}).

\bibliographystyle{apalike}
\bibliography{paper-ref}

\begin{thebibliography}{}

\bibitem[Benjamini and Hochberg, 1995]{benjamini1995controlling}
Benjamini, Y. and Hochberg, Y. (1995).
\newblock Controlling the false discovery rate: a practical and powerful approach to multiple testing.
\newblock {\em Journal of the Royal Statistical Society: Series B (Methodological)}, 57(1):289--300.

\bibitem[Berger and Boos, 1994]{berger1994p}
Berger, R.~L. and Boos, D.~D. (1994).
\newblock P values maximized over a confidence set for the nuisance parameter.
\newblock {\em Journal of the American Statistical Association}, 89(427):1012--1016.

\bibitem[Chen et~al., 2019]{chen2019inflation}
Chen, Y., Wang, J., Chubak, J., and Hubbard, R.~A. (2019).
\newblock Inflation of type {I} error rates due to differential misclassification in {EHR}-derived outcomes: empirical illustration using breast cancer recurrence.
\newblock {\em Pharmacoepidemiology and Drug Safety}, 28(2):264--268.

\bibitem[Chen et~al., 2024]{chen2024role}
Chen, Z., Li, X., and Zhang, B. (2024).
\newblock The role of randomization inference in unraveling individual treatment effects in early phase vaccine trials.
\newblock {\em Statistical Communications in Infectious Diseases}, 16(1):20240001.

\bibitem[De~Rosa et~al., 2012]{de2012omip}
De~Rosa, S.~C., Carter, D.~K., and McElrath, M.~J. (2012).
\newblock Omip-014: Validated multi-functional characterization of antigen-specific human t-cells by intracellular cytokine staining.
\newblock {\em Cytometry. Part A: the Journal of the International Society for Analytical Cytology}, 81(12):1019.

\bibitem[Edwards et~al., 2023]{edwards2023does}
Edwards, J.~K., Cole, S.~R., Shook-Sa, B.~E., Zivich, P.~N., Zhang, N., and Lesko, C.~R. (2023).
\newblock When does differential outcome misclassification matter for estimating prevalence?
\newblock {\em Epidemiology}, 34(2):192--200.

\bibitem[Finak et~al., 2014]{finak2014mixture}
Finak, G., McDavid, A., Chattopadhyay, P., Dominguez, M., De~Rosa, S., Roederer, M., and Gottardo, R. (2014).
\newblock Mixture models for single-cell assays with applications to vaccine studies.
\newblock {\em Biostatistics}, 15(1):87--101.

\bibitem[Gabriel et~al., 2020]{gabriel2020unified}
Gabriel, E.~E., Sachs, M.~C., Follmann, D.~A., and Andersson, T.~M. (2020).
\newblock A unified evaluation of differential vaccine efficacy.
\newblock {\em Biometrics}, 76(4):1053--1063.

\bibitem[Garrett et~al., 2025]{garretthybrid}
Garrett, N., Tapley, A., Hudson, A., Dadabhai, S., Zhang, B., Mgodi, N.~M., Andriesen, J., Takalani, A., Fisher, L.~H., Kee, J.~J., et~al. (2025).
\newblock Hybrid versus vaccine immunity of mrna-1273 among people living with hiv in east and southern africa: a prospective cohort analysis from the multicentre covpn 3008 (ubuntu) study.
\newblock {\em EClinicalMedicine}, 80.

\bibitem[Gilbert et~al., 2023]{gilbert2023controlled}
Gilbert, P.~B., Fong, Y., Kenny, A., and Carone, M. (2023).
\newblock A controlled effects approach to assessing immune correlates of protection.
\newblock {\em Biostatistics}, 24(4):850--865.

\bibitem[Gilbert et~al., 2016]{gilbert2016misclassification}
Gilbert, R., Martin, R.~M., Donovan, J., Lane, J.~A., Hamdy, F., Neal, D.~E., and Metcalfe, C. (2016).
\newblock Misclassification of outcome in case--control studies: methods for sensitivity analysis.
\newblock {\em Statistical Methods in Medical Research}, 25(5):2377--2393.

\bibitem[Heng and Shaw, 2025]{heng2025sensitivity}
Heng, S. and Shaw, P.~A. (2025).
\newblock Sensitivity analysis for binary outcome misclassification in randomization tests via integer programming.
\newblock {\em Journal of Computational and Graphical Statistics}, (just-accepted):1--24.

\bibitem[Horton et~al., 2007]{horton2007optimization}
Horton, H., Thomas, E.~P., Stucky, J.~A., Frank, I., Moodie, Z., Huang, Y., Chiu, Y.-L., McElrath, M.~J., and De~Rosa, S.~C. (2007).
\newblock Optimization and validation of an 8-color intracellular cytokine staining (ics) assay to quantify antigen-specific t cells induced by vaccination.
\newblock {\em Journal of Immunological Methods}, 323(1):39--54.

\bibitem[Hudgens and Halloran, 2006]{hudgens2006causal}
Hudgens, M.~G. and Halloran, M.~E. (2006).
\newblock Causal vaccine effects on binary postinfection outcomes.
\newblock {\em Journal of the American Statistical Association}, 101(473):51--64.

\bibitem[Imai and Yamamoto, 2010]{imai2010causal}
Imai, K. and Yamamoto, T. (2010).
\newblock Causal inference with differential measurement error: Nonparametric identification and sensitivity analysis.
\newblock {\em American Journal of Political Science}, 54(2):543--560.

\bibitem[Levey and Jennings, 1950]{levey1950use}
Levey, S. and Jennings, E. (1950).
\newblock The use of control charts in the clinical laboratory.
\newblock {\em American Journal of Clinical Pathology}, 20(11\_ts):1059--1066.

\bibitem[Magder and Hughes, 1997]{magder1997logistic}
Magder, L.~S. and Hughes, J.~P. (1997).
\newblock Logistic regression when the outcome is measured with uncertainty.
\newblock {\em American Journal of Epidemiology}, 146(2):195--203.

\bibitem[Shepherd et~al., 2008]{shepherd2008does}
Shepherd, B.~E., Redman, M.~W., and Ankerst, D.~P. (2008).
\newblock Does finasteride affect the severity of prostate cancer? {A} causal sensitivity analysis.
\newblock {\em Journal of American Statistical Association}, 103(484):1392--1404.

\end{thebibliography}

\end{document}